\documentclass[conference]{IEEEtran}
\IEEEoverridecommandlockouts

\usepackage{cite}
\usepackage{amsmath,amssymb,amsfonts}
\usepackage{amsthm}
\usepackage{mathtools}
\usepackage{algorithmic}
\usepackage[ruled,vlined]{algorithm2e}
\usepackage{graphicx}
\usepackage{textcomp}
\usepackage[table]{xcolor}
\usepackage{booktabs}
\usepackage{makecell}
\usepackage{multirow}
\usepackage{adjustbox}
\usepackage{tabularx}
\usepackage{array}
\usepackage{balance}
\usepackage[hidelinks]{hyperref}

\newcolumntype{Y}{>{\centering\arraybackslash}X}

\newtheorem{theorem}{Theorem}

\theoremstyle{definition}
\newtheorem{definition}{Definition}

\newcommand{\best}[1]{\textbf{\underline{#1}}}
\newcommand{\second}[1]{\textbf{#1}}
\newcommand{\Frameworkname}{LasRepair}

\def\BibTeX{{\rm B\kern-.05em{\sc i\kern-.025em b}\kern-.08em
    T\kern-.1667em\lower.7ex\hbox{E}\kern-.125emX}}

\usepackage{etoolbox}
\usepackage{ragged2e}
\usepackage{needspace}

\makeatletter

\patchcmd{\@makecaption}
  {{\normalfont\footnotesize\scshape #2}}
  {\parbox[t]{\hsize}{%
   \setlength{\parindent}{0pt}%
   \justifying
   \noindent\normalfont\footnotesize #2\par
 }}
  {}
  {\PackageWarning{caption-style}{Failed to patch table caption}}

\let\old@makecaption\@makecaption
\long\def\@makecaption#1#2{%
  \old@makecaption{\bfseries #1}{\bfseries #2}%
}

\makeatother

\begin{document}

\title{Collaborative Large and Small Language Models for Accurate and Scalable Data Repair}

\author{
\IEEEauthorblockN{Qian Chen, Jianwei Wang, Wenjie Zhang}
\IEEEauthorblockA{
University of New South Wales\\
Sydney, Australia\\
\{qian.chen6,jianwei.wang1,wenjie.zhang\}@unsw.edu.au}
}

\maketitle

\begin{abstract}
We study the problem of data repair, a key task in data cleaning that corrects erroneous entries in raw datasets to improve overall data quality.
Although recent data-driven methods, especially those based on large language models (LLMs), achieve remarkable performance, we observe that:
(i) they directly repair data in the raw and low-quality context, which may compromise learning signals, and
(ii) they directly use uncertain model outputs as repairs, potentially introducing unreliable corrections and compromising repair quality.
Motivated by the efficiency of small language models (SLMs) and the capabilities of LLMs, and aiming to address the above limitations, we propose \textbf{\Frameworkname}, a framework that collaborates \underline{\textbf{L}}arge \underline{\textbf{a}}nd \underline{\textbf{s}}mall language models for data \underline{\textbf{Repair}}.
\Frameworkname{} employs an LLM as an instructor, which selects a global repair context to guide the SLM. The SLM acts as a corrector, using the selected context to repair erroneous data more efficiently.
Moreover, to further improve context quality, we extend \Frameworkname{} to \Frameworkname{}+, which formulates data repair as an Expectation-Maximisation (EM) procedure that alternates between an E-step for updating the corrector parameters and an M-step for refining the repair context.
Furthermore, to mitigate model uncertainty, we propose \Frameworkname{}++, which uses column-calibrated model confidence to down-weight unreliable repaired rows when updating the corrector, thereby enhancing repair quality.
Theoretical analysis and empirical evaluation demonstrate the superiority of our methods. 
We theoretically prove the effectiveness of the EM-style procedure and the confidence-based weighting. 
Experiments on real-world datasets show that \Frameworkname{}++~ achieves an average F1-score improvement of 18.1\% over the strongest baseline.
Code is available at https://github.com/T-Lab/LasRepair.
\end{abstract}

\begin{IEEEkeywords}
Data repair, data cleaning, large language models, expectation-maximisation, confident learning
\end{IEEEkeywords}

\suppressfloats[t]

\begin{figure}[t]
  \vspace{15pt}
  \centering
  \includegraphics[width=\columnwidth, trim = 0 8 0 25]{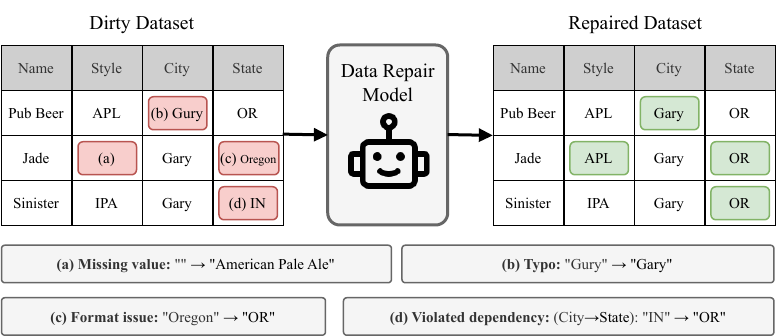}
  \caption{An illustrated example of common data errors. The dirty table contains (a)  a missing value $x_2[Style]$, (b) a typo in $x_2[City]$, (c) a formatting issue in $x_1[State]$, and (d) an attribute-dependency violation (e.g., $City\!\rightarrow\!State$) reflected in $x_3[State]$.}
  \label{fig:example}
\end{figure}

\begin{figure*}[htbp]
  \centering
  \includegraphics[width=\textwidth, trim= 0 15 0 8]{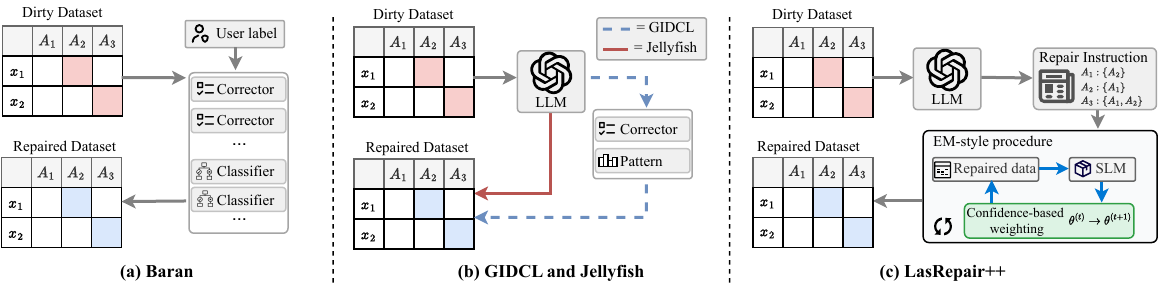}
  \caption{Conceptual comparison of representative data repair methods. (a) Baran relies on user-provided labels to train correctors and classifiers to produce a repaired dataset. (b) GIDCL and Jellyfish incorporate LLMs either to assist in correction pattern discovery or to perform direct repair. (c) \Frameworkname{}++ generates repair instructions with an LLM instructor and performs an EM-style procedure with an SLM corrector, further incorporating confidence-based weighting to improve performance.}
  \label{fig:comp}
\end{figure*}

\section{Introduction}
Real-world datasets are often noisy and error-prone, containing errors such as missing values, typos, and other inconsistencies, as illustrated in Figure~\ref{fig:example}. Such low-quality data can negatively affect the training of downstream models~\cite{li2021cleanml}, especially large language models (LLMs), which are sensitive to data quality~\cite{wang2025unimp, li2025datacomp}. To address this issue, data cleaning has been widely studied as a fundamental task in data quality, aiming to correct erroneous entries in relational tables and improve the reliability of real-world datasets~\cite{rahm2000dataclean, ilyas2019datacleanbook}.
Data cleaning typically consists of two tasks: error detection and data repair (i.e., error correction). 
Error detection aims to identify erroneous data entries, and data repair focuses on correcting them once they have been detected~\cite{mengqi2025treeed}. 
In this work, we focus on data repair and aim to improve data quality.


The problem of data repair has been studied extensively in the data management field (see the survey paper~\cite{wei2024survey} for a more comprehensive introduction).
Conventional data repair methods mainly include \emph{constraint-driven} methods~\cite{xu2013holistic, zuhair2015bigdansing, gao2019mlnclean,rezig2021horizon,dall2013nadeef} and \emph{hybrid} methods~\cite{theo2017holoclean,fei2011unified,geo2012relative} that leverage integrity constraints, statistical distributions, or downstream model performance to generate repair candidates and select repairs based on specific optimization criteria~\cite{bohan2007fd,theo2017holoclean,xu2013holistic}. 
However, these methods typically rely on handcrafted rules or task-specific assumptions, which limit their adaptability to complex, heterogeneous data and restrict their effectiveness when constraints are incomplete, noisy, or difficult to specify in practice.

Recently, \emph{data-driven} methods~\cite{mohamed2013scare,mohammad2020baran, yan2024gidcl,zhang2024jellyfish} have emerged as a more flexible alternative, learning to propose and select corrections directly from data and achieving state-of-the-art (SOTA) performance.
These methods can be broadly categorized into discriminative classifier-based methods and generative LLM-based methods.
Discriminative methods, exemplified by Baran~\cite{mohammad2020baran} in Figure~\ref{fig:comp}(a), generate repair candidates from multiple contextual views and train lightweight classifiers to select plausible corrections with limited supervision.
Generative LLM-based methods, exemplified by Jellyfish~\cite{zhang2024jellyfish} and GIDCL~\cite{yan2024gidcl} in Figure~\ref{fig:comp}(b), exploit the semantic capabilities and world knowledge of LLMs for generative data repair.

\noindent\textbf{Motivation.} 
Although these data-driven methods achieve SOTA accuracy, two crucial limitations persist.

First, existing repair methods are affected by \emph{context contamination}.
These methods are trained on raw datasets that may contain an unavoidable fraction of errors.
As a result, the context used to repair one cell can contain misleading information.
This issue is particularly severe when the error rate is high or when attributes are strongly correlated.
To mitigate this problem, Baran~\cite{mohammad2020baran} relies on user-labeled examples as clean supervision. However, such labels are costly and typically limited in scale, making it difficult to construct consistently reliable repair contexts.

Second, they may directly use uncertain model outputs as repairs. 
Such uncertainty can arise across both model training and inference.
During training, these repair models often use stochastic algorithms to optimize the loss, such as SGD~\cite{mandt2018sgd}, which optimizes the loss through randomized gradient updates, introducing uncertainty into the learned parameters.
For inference, these repair models usually generate probability distributions over possible outputs and commit to high-probability candidates, discarding the remaining probability mass.
For example, Jellyfish~\cite{zhang2024jellyfish} repairs serialized instances through autoregressive generation, where uncertainty in token-level probability distributions can accumulate across decoding steps and lead to uncertain repaired values in the final output~\cite{malinin2021uncertainty}.

\noindent\textbf{Challenges.} To overcome these limitations and design an effective data repair method, two main challenges exist.

\textit{Challenge I: How to find a reliable context in the low-quality table with complex relationships?} 
Real-world tables often contain various types of errors across multiple attributes that contaminate the repair context, making reliable context selection difficult.
A straightforward strategy is to remove all erroneous values, but such aggressive filtering can discard informative attributes or tuples, leading to insufficient context and biased repairs~\cite{Wang2023listwise}.
Another promising direction is to select correlated attributes as context to balance error reduction and information preservation~\cite{yan2024gidcl,guyon2003selection,yu2004effecient, shiyuan2026hyperjoin}.
Such a trade-off is difficult to achieve in complex relational tables, where useful dependencies may be implicit and entangled with corrupted values~\cite{theo2017holoclean}.
Therefore, finding reliable context from a low-quality table remains a challenge.

\textit{Challenge II: How to effectively and efficiently alleviate model uncertainty on large datasets?}
On large datasets, alleviating model uncertainty is challenging in terms of both effectiveness and efficiency~\cite{ovadia2019trustmodel, jianwei2024nomi}. 
On the one hand, effectiveness is hard to guarantee because models trained on large datasets can exhibit high prediction variance, leading to unstable and inconsistent repair outputs~\cite{mandt2018sgd,damour2020underspecification}.
On the other hand, maintaining efficiency at scale is challenging because the number of cells requiring repair grows with dataset size, making even lightweight repair operations computationally costly.
Therefore, effectively reducing uncertainty while avoiding additional verification or external checking remains a key challenge for large-scale data repair.


\noindent\textbf{Our approaches.} 
Motivated by the above challenges and the observation that small language models (SLMs) offer efficiency and LLMs provide strong reasoning capabilities, we propose \Frameworkname{}, a \underline{\textbf{L}}arge \underline{\textbf{a}}nd \underline{\textbf{s}}mall language model for accurate and scalable data \underline{\textbf{Repair}}. 
\Frameworkname{} adopts the \textit{LLM-as-an-instructor} paradigm~\cite{zheng2023judging}, in which an LLM instructs a corrector (i.e., an SLM) to enable scalable data repair.
Specifically, instead of relying on an LLM to repair every erroneous cell, \Frameworkname{} leverages the strong semantic reasoning capabilities and broad prior knowledge of the LLM~\cite{brown2020gpt3} to derive compact and relevant repair contexts for each target column.
Based on these contexts, the corrector is fine-tuned as a cell-level sequence-to-sequence repair model.

Based on \Frameworkname{} and to further address \textit{Challenge I}, we propose \Frameworkname{}+. 
\Frameworkname{}+ formulates data repair as an Expectation--Maximization (EM) process~\cite{apd1979em, Neal1998emview, AP1977EM} and iteratively performs the maximization step (M-step) and expectation step (E-step) to progressively improve the context quality and repair accuracy~\cite{xie2020selftraining}.
In the M-step, the corrector infers candidate repairs for the dataset and writes the generated repair back to the dataset to improve context quality.
In the E-step, the updated context is used as input to further fine-tune the corrector, enabling it to learn from increasingly reliable repaired data.
The updated corrector subsequently produces updated repairs for the next M-step.


Building on \Frameworkname{}+, we introduce \Frameworkname{}++ to address \textit{Challenge II} by assigning lower weights to potentially low-quality repairs during the E-step.
Rather than treating all generated repairs equally, \Frameworkname{}++ estimates the reliability of generated repairs and incorporates this reliability into model updates~\cite{ren2019learningreweight}.
Specifically, it derives confidence-based weights from model predictions and uses them to mitigate model uncertainty.
This lightweight weighting process reuses existing model outputs without requiring additional model calls, thereby avoiding substantial computational overhead.
Consequently, repairs with lower weights (higher uncertainty) contribute less to corrector updates.

\noindent\textbf{Theoretical and empirical studies.} We theoretically and empirically demonstrate the superiority of our methods. 
Theoretically, we analyze the proposed refinement procedure and show the monotonic behavior of the corresponding objective.
Moreover, we analyze how confidence-based weighting can reduce the influence of noise under a heteroscedastic noise model.
Empirical evaluations on 7 real-world datasets also demonstrate the strong performance of \Frameworkname{}++ in terms of 3 metrics.
\Frameworkname{}++ achieves F1-score improvement ranging from 5.6\% to 26.8\% over the previous SOTA methods. 
\Frameworkname{}++ also achieves the highest error drop rate on 6 out of 7 datasets and yields an average gain of 16\% over the best baseline.
Furthermore, \Frameworkname{}++ can efficiently handle large-scale datasets without compromising accuracy.

\noindent\textbf{Contributions.} The key contributions of this paper are summarized as follows.
\begin{itemize}
    \item We present the \Frameworkname{} family for data repair. The base \Frameworkname{} separates global context selection from local value generation, using an LLM as a dataset-level \emph{instructor} and a fine-tuned SLM as a cell-level \emph{corrector}.
    
    \item We propose an EM-style procedure that jointly updates the repaired data and the SLM corrector, allowing contaminated contexts to be progressively improved.
    \item We develop a confidence-based weighting mechanism that estimates the reliability of generated repairs and incorporates row-level confidence weights into SLM training, reducing the influence of model uncertainty.
    \item We demonstrate the effectiveness of \Frameworkname{}++ empirically through extensive experiments on 7 benchmark datasets, where it consistently outperforms SOTA baselines, yielding significant F1-score improvements and strong robustness across varying noise levels.
\end{itemize}

\section{PRELIMINARIES}
In this section, we formally define the problem and then review the SOTA solutions and their limitations. Frequently used notations are summarized in Table~\ref{tab:symbols}.

\subsection{Problem definition}
\begin{table}[t]
\centering
\small
\setlength{\tabcolsep}{3pt}
\renewcommand{\arraystretch}{1.0}
\caption{Summary of the main symbols and definitions used in this paper.
}
\label{tab:symbols}
\begin{tabular}{|p{0.24\columnwidth}|p{0.64\columnwidth}|}
\hline
\rowcolor{gray!20}
\textbf{Notation} & \textbf{Description} \\
\hline

$\mathbf{D}_e, \mathbf{D}_c, \mathbf{D}_r^{(t)}$
& Dirty/clean/repaired dataset. \\

$x_i, x_{ij}$
& Tuple $i$ and cell $(i,j)$. \\

$z_{ij}, \hat{z}_{ij}^{(t)}$
& Clean/repaired value at iteration $t$. \\

$\mathcal{E}$
& Set of detected erroneous cells. \\

$C$
& Column set. \\

$S$
& Dataset sketch for the instructor LLM. \\

$\mathbf{W}$
& Column influence matrix. \\

$\mathcal{G}$
& Induced column graph. \\

$\mathcal{T}_j$
& Influence tree for target column $j$. \\

$N(j)$
& Selected columns for target column $j$. \\

$\mathbf{x}_i^{(j)}$
& Serialized input for repairing cell $(i,j)$. \\

$g_{\theta}$
& SLM repair model. \\

$\theta^{(t)}$
& Model parameters at iteration $t$. \\

$\omega_i^{(t)}$
& Row-level weight at iteration $t$. \\

$\omega_i^{*(t)}$
& Normalized tuple weight at iteration $t$. \\

\hline
\end{tabular}
\end{table}

Following prior work~\cite{mohammad2020baran, wei2024survey, yan2024gidcl}, we consider a relational table with $n$ records and $d$ attributes. 
Let $\mathcal{X}_j$ denote the domain of attribute $j$, and let $\mathcal{X}=\mathcal{X}_1\times\cdots\times\mathcal{X}_d$ denote the tuple (record) domain.
The observed dirty table is $\mathbf{D}_e\in\mathcal{X}^{n}$ and the (unknown) clean ground-truth table is $\mathbf{D}_c\in\mathcal{X}^{n}$. Let $\mathbf{x}_i=(x_{i1},\ldots,x_{id})$ be the $i$-th record in $\mathbf{D}_e$, where $x_{ij}$ denotes the value of cell $(i,j)$ and $z_{ij}$ denotes its ground-truth value. For readability, we use column names $\mathcal{A}=\{a_1,\ldots,a_d\}$ and write $x_i[a_j]\triangleq x_{ij}$ (similarly $z_i[a_j]\triangleq z_{ij}$). 
For iterative repair, let $\mathbf{D}_r^{(t)}$ be the repaired table at iteration $t$ with $\mathbf{D}_r^{(0)}=\mathbf{D}_e$, and let $\mathbf{D}_r^{(T)}$ denote the final repaired table.

A cell $(i,j)$ is \emph{erroneous} if $x_{ij}\neq z_{ij}$. The unknown true error set is denoted by
$\mathcal{E}^\ast=\{(i,j)\mid x_{ij}\neq z_{ij}\}.$

In practice, $\mathbf{D}_c$ is unavailable, and an error detector returns a detected error set $\mathcal{E}\subseteq[n]\times[d]$, which specifies the cells to be repaired. This work focuses on the repair stage and assumes that $\mathcal{E}$ is given by an oracle detector.

\begin{definition}[Data repair]
Given a dirty table $\mathbf{D}_e$ and a detected error set $\mathcal{E}$, the goal of data repair is to produce a repaired table $\mathbf{D}_r$ that modifies only the detected erroneous cells and restores the erroneous values as closely as possible to their unknown clean values:
$$
\mathbf{D}_r[i,j]=x_{ij}\ \forall (i,j)\notin\mathcal{E},\quad
\min_{\mathbf{D}_r}\sum_{i=1}^{n}\sum_{j=1}^{d}\mathbb{I}\!\left[\mathbf{D}_r[i,j]\neq \mathbf{D}_c[i,j]\right].
$$
\end{definition}

\subsection{State-of-the-art}
As indicated by prior work and recent surveys~\cite{wei2024survey}, SOTA data-repair performance is increasingly achieved by data-driven methods that learn repair decisions from data rather than relying solely on manually specified constraints. 
Representative methods can be broadly grouped into two categories: 
(i) \emph{discriminative, classifier-based} methods that select repairs from a candidate set (e.g., Baran), and
(ii) \emph{generative, LLM-based} methods that directly generate repairs using the capabilities of LLMs (e.g., GIDCL and Jellyfish).
For simplicity, we refer to them as discriminative and generative methods.

\noindent\textbf{Discriminative methods.}
Baran~\cite{mohammad2020baran} is a representative discriminative repair framework. 
Given a dirty table and a set of detected erroneous cells, Baran instantiates multiple correctors that exploit complementary contextual signals of a cell to propose candidate repairs.
It then represents each cell-candidate pair with aggregated features and trains lightweight classifiers from limited user-provided corrections to select the most plausible candidate.

\noindent\textbf{Generative methods.}
Recent generative methods formulate repair as a sequence generation problem.
GIDCL~\cite{yan2024gidcl} combines graph-based table modeling with LLM-based repair, using structural signals to assist correction pattern discovery and improve repair consistency.
Jellyfish~\cite{zhang2024jellyfish} formulates data repair as an instruction-following generation task, where a serialized table context and error information are provided to a generative model to produce corrected values.

\section{OUR METHOD}
\begin{figure*}[ht]
  \centering
  \includegraphics[width=\textwidth, trim=0 5 0 5]{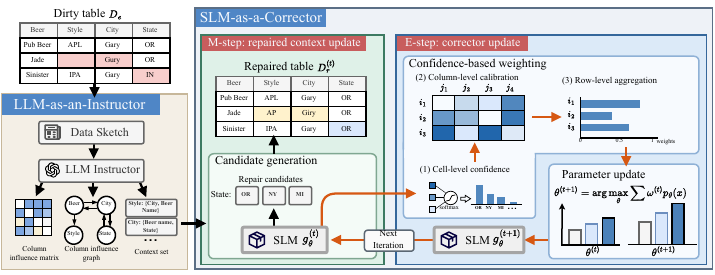}
  \caption{Overview of \Frameworkname{}++. Starting from the dirty table, the LLM instructor constructs a column influence graph and extracts the target-specific context set as repair instructions. The SLM corrector then generates repair candidates and the cell-level confidence. A confidence matrix is then computed and aggregated into confidence-based weights, which are used to update the model parameters. The process terminates when the stop criterion is satisfied and outputs the final repaired table.}
  \label{fig:workflow}
\end{figure*}

In this section, we introduce the \Frameworkname{} family, which consists of three progressively enhanced variants.
\Frameworkname{} is the base instructor--corrector framework in which an LLM instructor provides table-level structural guidance and an SLM corrector performs scalable cell-level repair.
\Frameworkname{}+ extends \Frameworkname{} with an EM-style procedure that repeatedly updates the repaired table and the corrector. 
\Frameworkname{}++ further extends \Frameworkname{}+ with confidence-based weighting to reduce the influence of model uncertainty.

\subsection{\Frameworkname{} method}

\noindent\textbf{Motivation.} 
A single model that directly repairs all cells often faces a trade-off between global reasoning and scalability.
This trade-off can be addressed by leveraging language models.
LLMs provide strong semantic reasoning, but applying an LLM to every erroneous cell can be costly, especially on large tables.
SLMs are more efficient and easier to fine-tune, but they require compact and informative context.
\Frameworkname{} therefore adopts an instructor--corrector paradigm: the instructor provides global, table-level structural guidance, while the corrector performs efficient cell-level repair.

\Frameworkname{} consists of two tightly connected components:
(i) an instructor model that infers a column influence structure to select a context set $\mathcal{N}(j)$ for each target column $j$;
(ii) a corrector model $g_\theta$ that repairs erroneous cells in $\mathcal{E}$ conditioned on the selected context.
This design separates global context selection from local value generation, reducing irrelevant or weakly informative context while preserving scalability.

\noindent \textbf{LLM-as-an-instructor.}
Structural decisions in \Frameworkname{} are made by an instructor at the table level.
Let $\mathcal{C}=\{1,\ldots,d\}$ denote the set of column indices.
We construct a compact table sketch $\mathbf{S}$ consisting of column names and a small set of rows sampled from $\mathbf{D}_e$, and then query the instructor once to infer inter-column informativeness.
The instructor returns a column influence matrix $\mathbf{W}\in[0,1]^{d\times d}$ with entries
\[
\mathbf{W}_{ab} \;=\; f_{\text{LLM}}(a,b,\mathbf{S}) \in [0,1], \qquad a\neq b,
\]
where $\mathbf{W}_{ab}$ quantifies the informativeness of column $a$ when repairing column $b$.
We interpret $\mathbf{W}$ as a directed weighted column influence graph~\cite{jianwei2024alice}:
\[
\mathcal{G} = (\mathcal{V}, \mathcal{A}, \mathbf{W}),\quad
\mathcal{V}=\mathcal{C},\quad
\mathcal{A}=\{(a\to b): a,b\in\mathcal{V}, a\neq b\}.
\]

Although $\mathbf{W}$ is defined pairwise, repairing a target column may require context from multiple correlated columns~\cite{jianwei2024transzero}.
We therefore aggregate influence along directed paths that lead to the target column.
For each target column $j\in\mathcal{C}$, and each candidate context column $v\neq j$, we define
\[
\phi_j(v)=
\max_{P:\, v\leadsto j,\, |P|\le h}
\prod_{(u\to u')\in P} \mathbf{W}_{u u'},
\]
where $P$ ranges over directed paths from $v$ to $j$ with length at most $h$.
Intuitively, $\phi_j(v)$ measures the strongest path through which column $v$ can inform the repair of column $j$, either directly or indirectly.
Given a threshold $\epsilon>0$, we then select the context set as
$$
\mathcal{N}(j)=\{\, v\in\mathcal{V}\setminus\{j\}:\ \phi_j(v)\ge \epsilon \,\}.
$$
By default, we set $\epsilon=0.6$ to control the context size. 
For interpretability, each selected $v\in\mathcal{N}(j)$ is connected to $j$ through the strongest path $\phi_j(v)$, resulting in an $h$-hop influence tree rooted at $j$.
This produces a deterministic and consistent context set for the corrector to use.

\noindent\textbf{SLM-as-a-corrector.}
Given the context set $\mathcal{N}(j)$ for each target column $j$, we formulate cell repair as a sequence generation problem.
For every row $i$ and target $j$, we construct a textual input sequence $\mathbf{x}^{(j)}_i$ as follows:
$$
\mathbf{x}^{(j)}_i = \mathrm{Serialize} \Big(\{(a_k, x_{ik}) : k \in \mathcal{N}(j)\},\, a_j\Big),
$$
where $\mathrm{Serialize}(\cdot)$ maps the selected column-value pair ($a_k$, $x_{ik}$) and the target column ($a_j$) into a token sequence that is shared across all rows~\cite{hegselmann2023tabllm,dinh2022lift}.

We use a pretrained SLM $g_\theta$ as the corrector and fine-tune it by maximizing the conditional log-likelihood over cells not flagged as erroneous by the detector:
$$
\ell(\theta)= \sum_{(i,j)\notin \mathcal{E}} \log p_\theta\!\big(x_{ij}\mid \mathbf{x}^{(j)}_i\big).
$$
At inference time, for each detected erroneous cell $(i,j)\in\mathcal{E}$:
$$
\hat{z}_{ij} = \arg\max_{z}\, p_\theta(z\mid \mathbf{x}^{(j)}_i),
$$
and write $\hat{z}_{ij}$ back to obtain the repaired table.

This design delegates the expensive task of global structure discovery to the instructor, while the corrector provides efficient cell-level repair.

\subsection{\Frameworkname{}+ method}
\label{subsec:em}
\noindent \textbf{Motivation.}
\Frameworkname{} selects a relevant context set for each target column, thereby reducing irrelevant information.
However, column-level relevance alone cannot guarantee cell-level reliability, as the selected columns can still contain erroneous values from the dirty table.
Using such low-quality values as context can mislead the corrector and consequently produce biased repairs.
Addressing this issue requires going beyond selecting relevant context to selecting high-quality context~\cite{jiang2018mentornet}.
We therefore extend \Frameworkname{} to \Frameworkname{}+, which adopts an EM-style procedure~\cite{apd1979em, Neal1998emview, AP1977EM} that alternately refines the repaired table and updates the corrector.

Let $\theta^{(t)}$ denote the parameters of the corrector $g_\theta$ at iteration $t$, and let $\mathbf{D}_r^{(t)}$ denote the repaired table at the beginning of iteration $t$, with $\mathbf{D}_r^{(0)}=\mathbf{D}_e$.
For each row $i$ and target column $j$, we construct the input from current repaired table $D_r^{(t)}$:
$$
x_i^{(j,t)} = \mathrm{Serialize}\Big( \{(a_k,\mathbf{D}_r^{(t)}[i,k]) : k\in\mathcal{N}(j)\},a_j\Big).
$$

\noindent \textbf{M-step: repaired context update.} 
At iteration $t$, the current corrector $g_{\theta^{(t)}}$ induces a predictive distribution $p_{\theta^{(t)}}(z\mid x_i^{(j,t)})$ over candidate repairs. 
For each detected erroneous cell $(i,j)\in\mathcal{E}$, the corrector generates a repair by maximum a posteriori as follows:
$$
\hat z^{(t)}_{ij} = \arg\max_{z} \; p_{\theta^{(t)}}(z\mid x_i^{(j,t)}).
$$
We then construct the next repaired table $\mathbf{D}_r^{(t+1)}$ by replacing each detected erroneous cell with $\hat z^{(t)}_{ij}$.
Since subsequent inputs are constructed from $\mathbf{D}_r^{(t+1)}$, this update turns generated repairs into an improved context for later repair steps.

\noindent \textbf{E-step: corrector update.} 
The corrector is then updated using the repaired context. Specifically, the parameters are updated by maximizing the conditional log-likelihood:
$$
\theta^{(t+1)} = \arg\max_{\theta}\sum_{(i,j)\notin \mathcal{E}} \log p_{\theta}(\hat z^{(t)}_{ij}\mid x_i^{(j,t+1)}).
$$
The update is performed using the repaired table $D_r^{(t+1)}$ as context, thereby adapting the corrector to progressively improved repair contexts.

\noindent\textbf{Effect of iterative refinement.} 
By alternating the above two update steps, \Frameworkname{}+ progressively refines the repaired table and the corrector parameters over successive iterations. 
Early iterations may still be affected by contaminated context, while later iterations benefit from an increasingly reliable repaired table and an updated corrector.
This mutually reinforcing procedure directly addresses context contamination by making the selected context increasingly reliable across iterations. 
We further provide a theoretical monotonicity guarantee for the corresponding conditional log-likelihood.

\begin{theorem}[Monotonicity of Likelihood]
\label{thm:em-monotone}
Let $\theta^{(t)}$ be the corrector parameter vector generated by the EM-style procedure at iteration $t$, and let $\ell(\theta)$ denote the corresponding conditional log-likelihood.
Then, after each EM update, $\ell(\theta)$ is non-decreasing across iterations:
\[
\ell\big(\theta^{(t+1)}, z^{(t+1)}\big)
\;\ge\;
\ell\big(\theta^{(t)}, z^{(t)}\big)
\qquad \text{for all } t\ge 0.
\]
\end{theorem}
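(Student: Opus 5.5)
This is a coordinate-ascent argument on a joint objective. I will define $\ell$ as a function of both the corrector parameters $\theta$ and the current assignment $z=\{z_{ij}\}_{(i,j)\in\mathcal{E}}$ of values to the detected cells. The table $\mathbf{D}(z)$ agrees with $\mathbf{D}_e$ outside $\mathcal{E}$ and takes the values $z$ on $\mathcal{E}$, and serialization is applied to this table. Concretely,
\[
\ell(\theta,z)=\sum_{(i,j)\in[n]\times[d]} \log p_\theta\big(\mathbf{D}(z)[i,j]\mid x_i^{(j)}(z)\big),
\]
where $x_i^{(j)}(z)$ is the serialized context built from $\mathbf{D}(z)$ restricted to $\mathcal{N}(j)$. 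The context sets $\mathcal{N}(j)$ are fixed once by the instructor, so they do not depend on $t$. I identify $z^{(t)}$ with the repaired table $\mathbf{D}_r^{(t)}$. The E-step is read as an exact maximization over $\theta$ with $z$ fixed, and the M-step as producing an assignment that does not decrease $\ell$ for fixed $\theta$. Under this reading the theorem is the standard two-inequality chain
\[
\ell(\theta^{(t)},z^{(t)})\le \ell(\theta^{(t)},z^{(t+1)})\le \ell(\theta^{(t+1)},z^{(t+1)}).
\]

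\textbf{Second inequality (E-step).} By construction, $\theta^{(t+1)}$ is an $\arg\max$ over $\theta$ of the log-likelihood evaluated with contexts built from $\mathbf{D}_r^{(t+1)}$. Hence $\ell(\theta^{(t+1)},z^{(t+1)})\ge \ell(\theta,z^{(t+1)})$ for every $\theta$, in particular for $\theta=\theta^{(t)}$. This needs only that the $\arg\max$ is attained, or that a non-decreasing (generalized EM) update is used. I will state this as an assumption, in the spirit of Neal and Hinton's view of EM.

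\textbf{First inequality (M-step).} This is the main obstacle. Each $\hat z_{ij}^{(t)}$ is chosen to maximize its own term $\log p_{\theta^{(t)}}(z\mid x_i^{(j,t)})$ given the old contexts. Changing $z$ also changes the contexts of the other cells that use column $j$ in their $\mathcal{N}(\cdot)$, so the per-cell $\arg\max$ does not automatically increase the joint objective. I will try two routes, in this order.

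The first route is a conditional (pseudo-likelihood) objective in which contexts are frozen at the previous iterate. Define
\[
Q(\theta,z\mid z')=\sum_{(i,j)} \log p_\theta\big(\mathbf{D}(z)[i,j]\mid x_i^{(j)}(z')\big).
\]
On $\mathcal{E}$, the M-step maximizes $Q$ over $z$ termwise, and non-detected cells are unaffected. This gives $Q(\theta^{(t)},z^{(t+1)}\mid z^{(t)})\ge Q(\theta^{(t)},z^{(t)}\mid z^{(t)})=\ell(\theta^{(t)},z^{(t)})$.

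The second route covers the context-shift discrepancy between $Q(\cdot\mid z^{(t)})$ and $\ell(\cdot,z^{(t+1)})$. It can be handled by requiring that the E-step objective at iteration $t+1$ be compared against the same contexts, that is, $\ell$ is understood as the objective the procedure actually optimizes. Alternatively, it can be handled by a Gauss–Seidel argument: update cells in an order such that a cell's target column is not in the context set of cells already updated, or update one column at a time. In that case each coordinate update is an exact maximization of the joint objective.

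I would present the frozen-context version as the precise statement and remark that the per-step argmax gives coordinate ascent. If $\ell$ is bounded above (log-probabilities are at most $0$), monotonicity also gives convergence of $\ell(\theta^{(t)},z^{(t)})$.
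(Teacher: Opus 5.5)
Your skeleton (an exact $\arg\max$ over $\theta$ in the E-step and a per-cell $\arg\max$ in the M-step, chained through $\ell(\theta^{(t)},z^{(t+1)})$) is the paper's. However, neither of your routes actually delivers the M-step inequality.

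The paper gets that inequality from an idealization. It defines $\ell(\theta,\hat{\mathbf z})=\sum_{(i,j)\in\mathcal{E}}\log p_\theta(\hat z_{ij}\mid x_i^{(j)})$ ``for a fixed set of serialized repaired inputs.'' Each $\hat z_{ij}$ then enters exactly one summand, and only as a target. The per-cell $\arg\max$ is therefore an exact maximization over $\hat{\mathbf z}$, the E-step is taken to maximize this same function over $\theta$, and both inequalities concern one function, so they chain.

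Once you let contexts depend on $z$, your repairs do not close the argument.
\begin{itemize}
\item Route one yields $Q(\theta^{(t)},z^{(t+1)}\mid z^{(t)})\ge\ell(\theta^{(t)},z^{(t)})$. But the E-step inequality is about $\ell(\cdot,z^{(t+1)})=Q(\cdot,z^{(t+1)}\mid z^{(t+1)})$. You have no inequality $\ell(\theta^{(t)},z^{(t+1)})\ge Q(\theta^{(t)},z^{(t+1)}\mid z^{(t)})$, and that missing link is exactly the context-shift term, so the chain breaks in the middle.
\item The Gauss--Seidel claim in route two is false. The coordinate $z_{ij}$ appears in its own summand and also in every summand $(i,k)$ with $j\in\mathcal{N}(k)$, including clean cells, which your $\ell$ sums over. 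The algorithm's update maximizes only the cell's own summand, whatever order cells or columns are visited in. For example, if $j\in\mathcal{N}(k)$ for a clean cell $(i,k)$, moving $z_{ij}$ to its own maximizer can lower $\log p_\theta(x_{ik}\mid x_i^{(k)}(z))$ by more than it gains. So no ordering makes the per-cell update an exact coordinate maximization unless column $j$ lies in no context set.
\end{itemize}

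The first option in your route two, comparing against the same contexts, is the right fix. It must be stated as a global hypothesis rather than as a reading of the algorithm. A single fixed set of serialized inputs is used by the M-step, by the E-step, and in evaluating $\ell$ at every iterate, so that $z$ enters $\ell$ only as targets. The algorithm itself does not do this: its M-step uses contexts from $\mathbf{D}_r^{(t)}$ and its E-step uses contexts from $\mathbf{D}_r^{(t+1)}$. That is why the paper states the result only under an idealized setting.

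Without that hypothesis you would need something extra. Examples are an M-step that maximizes the joint objective including the cross terms, or nonnegativity of $\ell(\theta^{(t)},z^{(t+1)})-Q(\theta^{(t)},z^{(t+1)}\mid z^{(t)})$. Your diagnosis of the context-shift issue is sound and exposes a real limitation of the paper's sketch. As written, though, your proposal leaves the M-step inequality unproved for the objective you actually defined.
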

\noindent A proof sketch is provided in Section~\ref{thm:em-monotone}. 

\subsection{\Frameworkname++ method}
\label{subsec:cl}

\noindent \textbf{Motivation.}
The EM-style procedure in \Frameworkname{}+ improves the context quality and updates the corrector.
However, the corrector still relies on stochastic optimization methods during training and probabilistic decoding during inference, both of which introduce uncertainty into the generated repairs.
Such uncertainty can accumulate throughout the EM-style procedure, as generated repairs are repeatedly reused to construct contexts for subsequent iterations~\cite{arazo2020pseudolabel, xie2020selftraining}. 
Consequently, without appropriate regulation, unreliable repaired values increasingly bias later corrector updates and repair predictions~\cite{nat2013noisylabel}.
Addressing this issue requires identifying uncertain repairs and limiting their contribution to corrector updates.
We therefore propose \Frameworkname{}++, which introduces confidence-based weighting to filter low-confidence repair candidates and down-weight uncertain repaired rows.
Specifically, confidence-based weighting addresses model uncertainty through three coordinated operations at the cell, column, and row levels:

\noindent\textbf{(1) Cell-level confidence.}
Confidence-based weighting begins by quantifying the confidence of each generated repair at the cell level.
Inspired by confident learning~\cite{northcutt2021CL}, we estimate the confidence by treating predicted repair values as classes.
However, unlike the original confident learning formulation, treating each repair as a distinct class would result in a sparse class space, making confidence estimation unstable.
We therefore use the predictive distribution at the first decoding position to estimate the cell-level confidence~\cite{hendrycks2018baselinedetect}.

Let $\mathbf{Voc}$ denote the corrector vocabulary.
At iteration $t$, for each detected erroneous cell $(i,j)\in\mathcal{E}$, the corrector produces a logit vector over $\mathbf{Voc}$ at the first decoding position. 
We extract the top-$K$ first-token candidates:
$$
A^{(t)}_{ij} = \{ a^{(t)}_{ij(1)},\ldots,a^{(t)}_{ij(K)} \},
$$
together with their logits $\mathrm{logit}^{(t)}_{ij(1)},\ldots,\mathrm{logit}^{(t)}_{ij(K)}$, where $a^{(t)}_{ij(r)}$ denotes the token with the $r$-th largest logit.
We then normalize these logits via a temperature-scaled softmax:
$$
q^{(t)}_{ij(r)}
= \frac{\exp(\mathrm{logit}^{(t)}_{ij(r)}/\tau)}{\sum_{s=1}^{K}\exp(\mathrm{logit}^{(t)}_{ij(s)}/\tau)},
\qquad
\sum_{r=1}^{K} q^{(t)}_{ij(r)}=1,
$$
where $\tau>0$ is the temperature parameter that controls the sharpness of the confidence distribution~\cite{guo2017temperature}. 

\noindent \textbf{(2) Column-level calibration.}
We then calibrate the cell-level confidence using column-level information to reduce confidence bias.
For a target column $j$ and token $v$, let
$$
\mathcal{I}^{(t)}_{j}(v) = \left\{\, (i,r)\ \middle|\ (i,j)\in\mathcal{E},\ a^{(t)}_{ij(r)}=v ,r\in[1,K]\right\}
$$
be the set of positions whose first token is $v$.
For a token $v$ with $\mathcal{I}^{(t)}_j(v)\neq\emptyset$, we define the class-specific threshold as:
$$
T^{(t)}_{j}(v)
= \frac{1}{|\mathcal{I}^{(t)}_{j}(v)|}\sum_{(i,r)\in\mathcal{I}^{(t)}_{j}(v)} q^{(t)}_{ij(r)}.
$$

Given the set of thresholds $\{T^{(t)}_{j}(v)\}_v$, for each detected erroneous cell $(i,j)\in\mathcal{E}$, we construct the corresponding accepted candidate index set as:
$$
\mathcal{R}^{(t)}_{ij}=\{r\in[1,K]\mid q^{(t)}_{ij(r)}\ge T^{(t)}_j(a^{(t)}_{ij(r)})\}.
$$
We then select the index of the most confident accepted token:
$$
\hat r_{ij}^{(t)} = \arg\max_{r\in \mathcal{R}^{(t)}_{ij}}q^{(t)}_{ij(r)},
$$
and use the corresponding first token to generate the remaining repair. 
If no candidate exceeds the threshold, we set $\hat r^{(t)}_{ij}=\varnothing$.
This threshold calibrates confidence within each target column.

\noindent\textbf{(3) Row-level aggregation.} 
We next aggregate the calibrated confidence into row-level weights for corrector updates. 
For cells not flagged as erroneous, we set the confidence to 1.
For accepted repairs, we use the probability of the selected first token as the confidence.
If no candidate passes the threshold, we set the confidence to $0$.
$$
c^{(t)}_{ij}=
\begin{cases}
1, & (i,j)\notin\mathcal{E},\\[4pt]
q^{(t)}_{ij(\hat r^{(t)}_{ij})}, & (i,j)\in\mathcal{E}\ \land\ \hat r^{(t)}_{ij}\neq\varnothing,\\[4pt]
0, & (i,j)\in\mathcal{E}\ \land\ \hat r^{(t)}_{ij}=\varnothing.
\end{cases}
$$

The calibrated confidence is then aggregated into a row-level reliability weight, denoted by $\omega^{(t)}_i=\frac{1}{d}\sum_{j=1}^d c^{(t)}_{ij}$, and let $\omega_i^{\ast(t)}$ be its normalized form.
This aggregation assigns smaller weights to rows containing low-confidence or rejected repairs, thereby limiting their influence on subsequent updates.

After the confidence-based weighting, we refine the E-step to update the corrector more reliably:
$$
\theta^{(t+1)}
=
\arg\max_{\theta}\;
\sum_{(i,j)\notin\mathcal{E}}
\omega^{\ast(t)}_i\,
\log p_{\theta}\!\left(\hat z^{(t)}_{ij}\mid x^{(j,t)}_i\right).
$$
By integrating confidence-based weighting in this way, \Frameworkname{}++ emphasizes training with more reliably generated repairs while down-weighting those affected by uncertainty.
This reduces the propagation of model uncertainty during the EM-style procedure.
We also state the theorem on the effectiveness of the confidence-based weighting.
\begin{theorem}[Effectiveness of confidence-based weighting]
\label{thm:cl_grad}
Let the weighted gradient be $\hat g_{\omega}$, the unweighted gradient be $\hat g_{u}$, and the true gradient be $g^\ast$. We have
$$
\mathbb{E}\|\hat g_{\omega}(\theta)-g^\ast(\theta)\|^2 \leq \mathbb{E}\|\hat g_u(\theta)-g^\ast(\theta)\|^2.
$$
This indicates that confidence-based weighting reduces the gradient variance induced by uncertainty.
\end{theorem}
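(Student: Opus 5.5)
We will prove Theorem~\ref{thm:cl_grad} inside an explicit heteroscedastic noise model, which the introduction names as the setting of the analysis. We will also fix the weights to be inverse-variance (or at least inverse-variance-ordered) rather than arbitrary. The statement cannot hold for arbitrary weights, so the first step is to make the model precise.

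Write the per-row gradient contribution as
\[
\hat g_i(\theta)=g^\ast(\theta)+\xi_i,
\]
with $\mathbb{E}[\xi_i]=0$, independent across rows, and $\mathbb{E}\|\xi_i\|^2=\sigma_i^2$. The variance $\sigma_i^2$ grows with the uncertainty of the repaired cells in row $i$, because uncertain repairs $\hat z^{(t)}_{ij}$ act as noisy labels. Zero mean means each row is individually unbiased for $g^\ast$.

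The two estimators are
\[
\hat g_u=\frac1n\sum_i \hat g_i
\qquad\text{and}\qquad
\hat g_\omega=\sum_i \omega_i^\ast \hat g_i,\quad \sum_i\omega_i^\ast=1 .
\]
The normalization of $\omega_i^{\ast(t)}$ in the paper is exactly what makes $\hat g_\omega$ unbiased. Expanding the square and using independence then collapses both errors to variances:
\[
\mathbb{E}\|\hat g_\omega-g^\ast\|^2=\sum_i(\omega_i^\ast)^2\sigma_i^2,
\qquad
\mathbb{E}\|\hat g_u-g^\ast\|^2=\frac1{n^2}\sum_i\sigma_i^2 .
\]
The bias--variance decomposition, with the bias term zero, is the first key step.

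The second step compares the two variances. We link confidence to noise through the assumption $\omega_i\propto 1/\sigma_i^2$, interpreting the row-level confidence $\omega_i^{(t)}$ as a precision proxy. This is justified heuristically: low calibrated confidence $c_{ij}$ means the true repair is uncertain. Under this assumption we minimize $\sum_i w_i^2\sigma_i^2$ subject to $\sum_i w_i=1$ with a Lagrange multiplier. The minimizer is $w_i=\sigma_i^{-2}/\sum_k\sigma_k^{-2}$, with minimum $1/\sum_i\sigma_i^{-2}$. The uniform weights $1/n$ are feasible, so the inequality follows immediately. Equivalently, Cauchy--Schwarz gives $n^2\le\left(\sum_i\sigma_i^2\right)\left(\sum_i\sigma_i^{-2}\right)$. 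Equality holds iff all $\sigma_i$ are equal, i.e., the homoscedastic case.

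The main obstacle is that the actual weights are affine averages of calibrated confidences, not exact inverse variances. To cover them, we will instead assume only that $\omega_i^\ast$ is ordered oppositely to $\sigma_i^2$: rows with larger noise receive no larger weight. We then aim to show $\sum_i(\omega_i^\ast)^2\sigma_i^2\le\frac1n\sum_i\omega_i^\ast\sigma_i^2\le\frac1{n^2}\sum_i\sigma_i^2$. The second inequality is Chebyshev's sum inequality for oppositely ordered sequences. The first, however, does not follow from anti-ordering alone, so it must be imposed as an additional condition: that the weights are not too concentrated relative to the noise. The clean statement to assert is the inverse-variance case, with anti-ordering plus this condition stated as a remark.

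We will also note that $\hat g_\omega$ is unbiased only under the zero-mean noise assumption. If the repairs are biased rather than merely noisy, the argument must additionally assume that the bias scales with $\sigma_i$. In that case the same weighting argument carries through for the bias term.
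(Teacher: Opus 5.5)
Your proof is correct and is essentially the paper's argument: the paper assumes $\epsilon_i\sim N(0,\sigma^2/\omega_i^\ast)$, which is exactly your inverse-variance case $\omega_i^\ast\propto\sigma_i^{-2}$. The weighted variance is then $\sigma^2\sum_i\omega_i^\ast=\sigma^2$, the uniform one is $\frac{\sigma^2}{n^2}\sum_i 1/\omega_i^\ast$, and Cauchy--Schwarz shows the latter is at least $\sigma^2$. Your common target $g^\ast$ for every row is, if anything, more careful than the paper's sketch, which writes $\hat g_i=g_i^\ast+\epsilon_i$ with row-specific $g_i^\ast$ and $g^\ast=\frac1n\sum_i g_i^\ast$, and thus leaves the bias $\sum_i(\omega_i^\ast-\frac1n)g_i^\ast$ of $\hat g_\omega$ unaddressed in its ``direct variance comparison.''
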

\noindent A proof sketch is provided in Section~\ref{subsec:cl_sktech}. 

\subsection{Overall algorithm}
\label{subsec:algorithm}
We summarize the complete workflow of \Frameworkname{}++ in Algorithm~\ref{alg:overall}. 
Given a dirty table $\mathbf{D}_e$ and a detected error set $\mathcal{E}$, \Frameworkname{}++ first invokes the LLM as an instructor to compute a score matrix $\mathbf{W}$ and constructs a directed weighted graph $\mathcal{G}$ based on $\mathbf{W}$.
It then derives a context set $\mathcal{N}(j)$ for each target column $j$ (Lines 2--5).

After the LLM-as-an-instructor stage, \Frameworkname{}++ performs the EM-style procedure in Lines 6--23.
Each iteration first executes the M-step, where the corrector generates a repaired value for each detected erroneous cell under the selected context to update the repaired table $\mathbf{D}_r^{(t+1)}$ (Lines 7--11).
It then executes the E-step, which incorporates confidence-based weighting to update the corrector (Lines 12--22).
Specifically, the algorithm first estimates cell-level confidence over the top-$K$ first-token candidates (Lines~12--14).
It then performs column-level calibration by computing confidence thresholds and selecting the accepted first token (Lines 15--17).
Next, the calibrated cell-level confidence is aggregated into row-level weights (Lines 18--21).
Finally, these confidence-based weights are incorporated into the corrector update (Line 22).
The algorithm terminates when the confidence-based weights converge or the maximum number of iterations is reached, and returns the final repaired table $\mathbf{D}_r^{(t)}$.

\begin{algorithm}[t]
\caption{Overall Workflow of \Frameworkname{}++}
\LinesNumbered
\DontPrintSemicolon
\label{alg:overall}
\small
\KwIn {dirty table $\mathbf{D}_e$, detected error set $\mathcal{E}$, LLM $\mathcal{M}$, SLM $g_{\theta^{(0)}}$, maximum iterations $T$, hop budget $h$, top-$K$ size $k$, context threshold $\epsilon$, convergence tolerance $\eta$, temperature $\tau$. \\}
\KwOut{repaired table $\mathbf{D}_r$.} 
$\mathbf{D}_r^{(0)} \leftarrow \mathbf{D}_e$\\
\tcp*[l]{\color{blue}\textbf{Stage I: LLM as an instructor}}
$\mathbf{W} \leftarrow \textsc{QueryLLM}(\mathcal{M},\, \textsc{Sample}(\mathbf{D}_e))$ \\
$\mathcal{G} \leftarrow \textsc{WeightGraph}(\mathbf{W})$\\
\For{$j \in \mathcal{C}$}{
    $\mathcal{N}(j) \leftarrow \textsc{KHopTreeNeighbour}(\mathcal{G},h,j, \epsilon)$ \\
}
\tcp*[l]{\color{blue}\textbf{Stage II: EM-style procedure}}
\While{$t\le T$ \textbf{and} $ \Delta_{\omega}\geq \eta$}{
    \tcp*[l]{\color{blue}\textbf{M-step: repaired context update}}
    $\mathbf{D}_r^{(t+1)}\leftarrow \mathbf{D}_r^{(t)}$\\
    \For{$(i,j)\in\mathcal{E}$}{
        $\mathbf{x}_{i}^{(j,t)}\leftarrow\textsc{Serialize}(\mathbf{D}_r^{(t)},i,\mathcal{N}(j))$\\
        $z_{ij}^{(t)}\leftarrow \arg\max_zp_{\theta^{(t)}}(z\mid x_i^{(j,t)})$\\
        $\mathbf{D}_r^{(t+1)}[i,j]\leftarrow z_{ij}^{(t)}$
    }
    \tcp*[l]{\color{blue}\textbf{E-step: corrector update}}
    \tcp*[l]{\color{blue}\textbf{Cell-level confidence}}
    \For{$(i,j)\in\mathcal{E}$}{
        $(A_{ij}^{(t)},\mathrm{logit}_{ij}^{(t)})\leftarrow \textsc{TopKToken}(g_\theta^{(t)},\mathbf{x}_i^{(j,t)},k)$\\
        $q_{ij}^{(t)}\leftarrow \textsc{Softmax}(\mathrm{logit}_{ij}^{(t)},\tau)$
    }
    \tcp*[l]{\color{blue}\textbf{Column-level calibration}}
    \For{$j\in\mathcal{C}$}{
        $T_j^{(t)}(\cdot)\leftarrow \textsc{TokThreshold}(A_{ij}^{(t)},q_{ij}^{(t)}:(i,j)\in\mathcal{E})$\\
        $\hat r_{ij}^{(t)} \leftarrow \textsc{SelectTok}\big(A_{ij}^{(t)}, q_{ij}^{(t)}, T_j^{(t)}(\cdot)\big)$
    }
    \tcp*[l]{\color{blue}\textbf{Row-level aggregation}}
    \For{all $(i,j)$}{
        $c_{ij}^{(t)}\leftarrow \textsc{Confidence}(q_{ij}^{(t)},\hat r_{ij}^{(t)})$
    }
    $\omega^{(t)}\leftarrow \textsc{Aggregation}(c^{(t)})$\\
    $\omega^{\ast(t)}\leftarrow \textsc{Normalize}(\omega^{(t)})$\\
    $\theta^{(t+1)}\leftarrow \textsc{WeightUpdate}(\theta^{(t)},\omega^{\ast(t)})$\\
    $t\leftarrow t+1$
}

\textbf{return} $\mathbf{D}_r^{(t)}$
\end{algorithm}

\section{ANALYSIS}
 \subsection{Time complexity analysis}
 \label{subsec:time-analysis}
We analyze the time complexity of \Frameworkname{}++ by separating the instructor stage and the corrector stage.
Since the instructor is invoked once on a compact sketch, we exclude external LLM inference from the asymptotic complexity and focus on the computation performed by the corrector.

After obtaining the score matrix $\mathbf{W}$, for each target column $j\in\{1,\dots,d\}$, we construct an $h$-hop influence tree and select the context set via the threshold $\epsilon$.
A straightforward dynamic programming implementation costs $O(hd^2)$ per target column, yielding a total cost of $O(hd^3)$ for graph construction.

For the EM-style procedure, let $|\mathcal{E}|$ be the number of detected erroneous cells, and let $C_{\mathrm{SLM}}$ denote the average per-cell cost of one corrector pass on a serialized input. 
At each iteration of the EM-style procedure, the corrector generates repairs and confidence for cells in $\mathcal{E}$, and is then fine-tuned using the confidence-based weights.
Therefore, for $e$ fine-tuning epochs over $T$ iterations, the dominant cost is $O(T\cdot \textit{e}\cdot |\mathcal{E}|\cdot C_{\mathrm{SLM}})$.
The additional cost of computing class-specific thresholds and confidence-based weights is linear in the number of candidates and detected erroneous cells, and is dominated by corrector inference and training in practice.

\subsection{Effectiveness of the EM-style procedure}
\label{subsec:em-effectiveness}

We now justify the monotonicity claim in Theorem~\ref{thm:em-monotone} under the idealized setting where the repaired context update and the corrector update are solved exactly. 
Let $\{(\theta^{(t)}, \hat{\mathbf z}^{(t)})\}_{t\ge 0}$ be the sequence generated by the updates in Section~\ref{subsec:em}, where $\hat{\mathbf z}^{(t)} = \{\hat z_{ij}^{(t)}\}_{(i,j)\in\mathcal{E}}$.
For a fixed set of serialized repaired inputs, define the conditional log-likelihood as
$$
\ell(\theta,\hat{\mathbf z})=\sum_{(i,j)\in\mathcal{E}}
\log p_\theta(\hat z_{ij}\mid x_i^{(j)}).
$$
\noindent\textbf{Proof sketch of Theorem~\ref{thm:em-monotone}.}
The argument follows from the properties of the alternating optimization procedure.

\textit{Repaired context update.}
Given $\theta^{(t)}$, the repaired context update selects the most likely repair value for each detected erroneous cell. 
Therefore, under the exact update assumption,
$$
\ell(\theta^{(t)},\hat{\mathbf z}^{(t+1)})\ge\ell(\theta^{(t)},\hat{\mathbf z}^{(t)}).
$$

\textit{Corrector update.}
Given $\hat{\mathbf z}^{(t+1)}$, the corrector update optimizes the same objective with respect to $\theta$. Hence,
$$
\ell(\theta^{(t+1)},\hat{\mathbf z}^{(t+1)})\ge\ell(\theta^{(t)},\hat{\mathbf z}^{(t+1)}).
$$

Combining the two inequalities yields
$$
\ell(\theta^{(t+1)},\hat{\mathbf z}^{(t+1)})
\ge
\ell(\theta^{(t)},\hat{\mathbf z}^{(t+1)})
\ge
\ell(\theta^{(t)},\hat{\mathbf z}^{(t)}),
$$
which proves the monotonic non-decrease of the conditional log-likelihood under the idealized EM-style assumptions.

For \Frameworkname{}++, the confidence-based weights are computed before the weighted corrector update and then fixed during that update.
Therefore, the same alternating-optimization argument applies to the corresponding weighted objective (the confidence-based weighting).
The complete proof is provided in our supplementary repository~\cite{lasrepair_proofs}.

\subsection{Effectiveness of confidence-based weighting}
\label{subsec:cl-theory}

We now theoretically analyze the effect of the confidence-based weighting in \Frameworkname{}++. 
As introduced in Section~\ref{subsec:cl}, \Frameworkname{}++ aggregates cell-level confidence into confidence-based weights and uses the normalized weights $\omega_i^\ast$ to reduce the gradient variance induced by uncertainty.
We now justify that claim for Theorem~\ref{thm:cl_grad}. Consider one fixed iteration and omit the superscript $t$.


\medskip
\noindent\textbf{Proof sketch of Theorem~\ref{thm:cl_grad}.}
\label{subsec:cl_sktech}
For each row $i$, define the loss
$$
\hat \ell_i(\theta) = \sum_{(i,j)\in\mathcal{E}}-\log p_\theta(\hat z_{ij}\mid x_i^{(j)}),
$$
and the ground-truth loss
$$
\ell^\ast_i(\theta) = \sum_{(i,j)\in\mathcal{E}}-\log p_\theta(z_{ij}\mid x_i^{(j)}).
$$
Let $\hat g_i(\theta)=\nabla_\theta \hat \ell_i(\theta)$ and $g_i^\ast(\theta)=\nabla_\theta \ell_i^\ast(\theta)$ denote the gradients computed from generated repairs and ground-truth values, respectively. 
Assume that
$$
\hat g_i(\theta)=g_i^\ast(\theta)+\epsilon_i,
\qquad
\epsilon_i\sim N(0,\sigma^2/\omega_i^\ast),
$$
where $\{\epsilon_i\}_{i=1}^n$ are conditionally independent. 
For the gradients in the Theorem~\ref{thm:cl_grad}, we have
$\hat g_{\omega}(\theta)=\sum_{i=1}^n \omega_i^\ast \hat g_i(\theta)$,
$\hat g_u(\theta)=\frac1n\sum_{i=1}^n \hat g_i(\theta)$ and
$g^\ast(\theta)=\frac1n\sum_{i=1}^n g_i^\ast(\theta)$.

Under this setting, rows with higher confidence induce lower gradient variance. 
Therefore, the weighted estimator assigns less weight to uncertain rows than does uniform averaging.
A direct variance comparison, together with the Cauchy--Schwarz inequality, yields
$$
\mathbb{E}\|\hat g_{\omega}(\theta)-g^\ast(\theta)\|^2
\le
\mathbb{E}\|\hat g_u(\theta)-g^\ast(\theta)\|^2.
$$
This proves the claim in Theorem~\ref{thm:cl_grad}. The full proof is provided in our supplementary repository~\cite{lasrepair_proofs}.
\section{Experiments}
\subsection{Dataset description}
We evaluate \Frameworkname{}++ on 7 benchmark datasets widely used in prior data-cleaning studies~\cite{wei2024survey, gao2019mlnclean, mohammad2020baran}. 
Table~\ref{tab:datasets} summarizes their statistics, including dataset size, error types, cell-level error rates, and row-level error rates. 
The error types include missing value (MV), typo (T), violated attribute dependency (VAD), and formatting issue (FI). 

\begin{table}[t]
\centering
\caption{Statistics of the datasets. R-rate and C-rate are Row-level error rate and Cell-level error rate, respectively.}
\label{tab:datasets}
\vspace{-0.2cm}
\scalebox{0.8}{
\begin{tabular}{|c|c|c|c|c|c|}
\hline
\textbf{Datasets}
  & \textbf{\#Tuples}
  & \textbf{\#Attrs}
  & \textbf{R-rate, \%}
  & \textbf{C-rate, \%}
  & \textbf{Error type} \\
\hline\hline
Hospital & 1,000 & 20 & 40.70 & 2.67 & T, VAD      \\
Flight   & 2,376 & 7  & 80.13 & 34.51 & MV, FI, VAD \\
Beers    & 2,410 & 11 & 100.00 & 13.93 & MV, FI, VAD \\
Walmart   & 4,653 & 5  & 62.91 & 17.92 & SY \\
Tax\_20k  & 20,000  & 15  & 4.76 & 0.32 & T, FI, VAD \\
Shuttle   & 43,500 & 10  & 73.55 & 12.44 & SY \\
Tax\_200k & 200,000 & 15  & 1.46 & 0.10 & T, FI, VAD \\
\hline
\end{tabular}
}
\end{table}

\noindent\textbf{Synthetic error generation.} For datasets marked as SY, only a clean version is available. 
Therefore, we follow the prior work~\cite{errorinje2015aro} to inject errors and construct the dirty version.
The injection process follows the Missing at Random (MAR) assumption~\cite{rubin1976mar}, where corruption depends on observed information but not the unknown clean value.
Specifically, we first sample cells according to a target error rate.
For each selected cell, we sample an error type based on the column data type and apply the corresponding operator.
For MV, we replace the value with an empty token.
For T, we apply $1$--$k$ character-level edits to strings, where $k\in\{1,2,3\}$ and $k\le |x_{ij}|/2$. Each edit is uniformly sampled from insertion, deletion, substitution, and transposition.
For numerical columns, VAD adds zero-mean noise scaled by column statistics, while FI applies format transformations such as integer-to-float conversion.

\subsection{Experimental setup}  
\noindent\textbf{Baselines.} We compare \Frameworkname{}++ with 8 representative baselines, covering both non-generative and generative data repair methods.
For compact presentation, we group the conventional and discriminative baselines as non-generative methods in the experimental tables:
1) Baran~\cite{mohammad2020baran}, a data-driven repair framework that generates candidate repairs and selects corrections using learned classifiers;
2) HoloClean~\cite{theo2017holoclean}, which combines integrity constraints, external signals, and statistical inference to identify likely correct values;
3) BoostClean~\cite{sanjay2017boostclean}, which selects cleaning operations to improve downstream data quality;
4) Unified~\cite{fei2011unified}, which performs tolerant repair under functional dependencies using a minimum description length principle; 
5) BigDansing~\cite{zuhair2015bigdansing}, which accelerates rule-based data repair with two optimization strategies;
6) Holistic~\cite{xu2013holistic}, which models equivalent classes and conflicts via a hypergraph for data repair.
The generative baselines include:
7) Jellyfish~\cite{zhang2024jellyfish}, an LLM-based model for data preprocessing and repair;
8) GIDCL~\cite{yan2024gidcl}, a graph-enhanced LLM-based data cleaning framework.

\noindent\textbf{Metrics.} 
We evaluate repair quality using three complementary metrics: F1-score, Error Drop Rate (EDR), and Record Distance Reduction Rate (RDRR).
F1-score~\cite{sasaki2007truth} is the primary metric and is computed from exact match repair outcomes following prior work.
To further characterize repair performance, we additionally report EDR~\cite{wei2024survey} and RDRR, which capture complementary aspects of repair quality.

EDR measures the relative reduction in the number of erroneous cells after repair:
$$
\text{EDR}=\frac{\text{OEC}-\text{REC}}{\text{OEC}},
$$
where $\text{OEC}$ and $\text{REC}$ denote the \textit{original error count} and \textit{remaining error count}, respectively. 
A larger EDR indicates that more original errors are removed.

To evaluate partial repairs that do not exactly match the ground truth, we further introduce RDRR based on the Levenshtein edit distance $d_{\mathrm{lev}}$~\cite{levenshtein1966binary}.
Let $g_i$ denote the ground-truth value and $r_i$ denote the corresponding value in either the dirty or repaired dataset. 
Let $N$ denote the number of evaluated cells. We first compute the average normalized edit distance:
$$
\text{AED}=\frac{1}{N}\sum_{i=1}^N \frac{d_{\mathrm{lev}}(g_i, r_i)}{\max(|g_i|, |r_i|)},
$$
RDRR is then defined as
$$
\text{RDRR} = \frac{\text{AED}_d - \text{AED}_r}{\text{AED}_d},
$$
where the subscripts $d$ and $r$ denote the dirty and repaired datasets, respectively. 
RDRR measures the relative reduction in edit distance achieved by repair. 
It is particularly useful when a method improves a corrupted value substantially but does not recover the exact ground truth. Higher values of all three metrics indicate better performance. \\

\noindent\textbf{Implementation details.} All baselines are executed with their default hyperparameters when available. 
We use \textit{Jellyfish-7B} in the Jellyfish and GIDCL experiments. 
For \Frameworkname{}++, we use \textit{T5-large}~\cite{raffel2023t5}, with 770M parameters, as the SLM corrector, and GPT-5~\cite{singh2026gpt5} as the instructor. 
Unless otherwise specified, we train the corrector for 3 epochs, run at most 10 EM-style iterations, and set the softmax temperature to $\tau=1$. 
For large-scale datasets, we sample 2,000 records to fine-tune the corrector while evaluating repair quality on the target cells. 
All experiments are conducted on a server with an Intel Xeon Silver 4313 CPU and an NVIDIA RTX A5000 GPU.

\newcolumntype{P}[1]{>{\centering\arraybackslash}p{#1}}
\newcommand{\mcolw}{0.88cm} 

\begin{table*}[t]
\caption{Overall repair performance on 7 benchmark datasets. We compare non-generative and generative repair methods using F1, EDR, and RDRR, where higher values indicate better quality. The best result is highlighted in bold and underlined, and the second-best result is highlighted in bold. The last block reports average improvement of \Frameworkname{}++ over each baseline. OOM and OOT denote out-of-memory and out-of-time failures.}

\vspace{-0.3cm}
\label{tab:f1score}
\centering
\small
\setlength{\tabcolsep}{2pt}
\renewcommand{\arraystretch}{0.98}
\begin{adjustbox}{max width=\textwidth}
\begin{tabular}{|P{2.1cm}|P{1.0cm}|*{6}{P{1.4cm}|}*{3}{P{1.4cm}|}}
\hline
\multirow{2}{*}{\textbf{Datasets}} &
\multirow{2}{*}{\textbf{Metrics}} &
\multicolumn{6}{c|}{\textbf{Non-generative}} &
\multicolumn{3}{c|}{\textbf{Generative}} \\
\cline{3-11}
& &
\multicolumn{1}{|>{\columncolor{gray!15}}c|}{\textbf{Baran}} &
\multicolumn{1}{>{\columncolor{gray!15}}c|}{\textbf{HoloClean}} &
\multicolumn{1}{>{\columncolor{gray!15}}c|}{\textbf{BoostClean}} &
\multicolumn{1}{>{\columncolor{gray!15}}c|}{\textbf{Unified}} &
\multicolumn{1}{>{\columncolor{gray!15}}c|}{\textbf{BigDansing}} &
\multicolumn{1}{>{\columncolor{gray!15}}c|}{\textbf{Holistic}} &
\multicolumn{1}{>{\columncolor{gray!15}}c|}{\textbf{Jellyfish}} &
\multicolumn{1}{>{\columncolor{gray!15}}c|}{\textbf{GIDCL}} &
\multicolumn{1}{>{\columncolor{gray!15}}c|}{\textbf{\Frameworkname++}} 
\\
\hline

\multirow{3}{*}{Hospital}
 & F1   & 0.5844 & 0.6262 & 0.3312 & 0.7825 & 0.6210 & 0.6101 & 0.8871 & \second{0.8925} &  \best{0.9632} \\
 & EDR  & 0.4165 & 0.4538 & 0.3132 & 0.7612 & -0.0716 & -0.0187 & 0.8730 & \second{0.8822} &   \best{0.9871} \\
 & RDRR & 0.3548 & 0.4432 & 0.1037 & 0.5960 & 0.0245 & 0.0449 & 0.8014 & \second{0.8256} &  \best{0.8647} \\
\hline

\multirow{3}{*}{Flights}
 & F1   & \second{0.6369} & 0.4734 & 0.0620 & 0.5075 & 0.1090 & 0.2313 & 0.6324 & 0.5903 &  \best{0.8798} \\
 & EDR  & 0.4913 & -0.1339 & -0.0028 & 0.0541 & -0.3543 & -0.1423 & \second{0.5919} & 0.4872 &  \best{0.9072} \\
 & RDRR & 0.5315 & -0.2341 & -0.0039 & 0.0411 & -1.0856 & -0.7324 & \second{0.5830} & 0.3156 &  \best{0.9683} \\
\hline

\multirow{3}{*}{Beers}
 & F1   & 0.7576 & 0.0467 & 0.0109 & 0.0324 & 0.1157 & 0.0892 & \second{0.7832} & 0.4332 &  \best{0.8894} \\
 & EDR  & 0.7082 & -3.6209 & -0.7192 & -0.2442 & -0.0108 & -0.0093 & \second{0.7264} & -0.1335 &  \best{0.8838} \\
 & RDRR & \second{0.7842} & -5.7141 & -0.9923 & -0.3121 & -0.1192 & -0.1372 & 0.7197 & -0.2290 & \best{0.9253} \\
\hline

\multirow{3}{*}{Walmart}
 & F1   & 0.2460 & \second{0.5315} & 0.3167 & 0.3528 & 0.3797 & 0.2912 & 0.4712 & 0.4372 &  \best{0.7409} \\
 & EDR  & 0.0368 & 0.0276 & 0.1049 & 0.3006 & -0.2334 & -0.3876 & \second{0.3122} & -0.0897 &  \best{0.5317} \\
 & RDRR & 0.0113 & 0.0198 & 0.1953 & \second{0.4414} & -0.5298 & -0.4893 & 0.3464 & -0.2315 & \best{0.5417} \\
\hline

\multirow{3}{*}{Tax\_20k}
 & F1   & 0.3512 & 0.0000 & 0.0000 & \textsc{oot} & 0.2413 & 0.2413 & 0.5774 & \second{0.5923} & \best{0.6944} \\
 & EDR  & 0.2490 & -0.0198 & 0.0000 & -- & -0.9127 & -0.9127 & \second{0.5008} & 0.4863 &  \best{0.6793} \\
 & RDRR & 0.1738 & -0.6295 & -0.3957 & -- & -0.3531 & -0.3531 & 0.6022 & \second{0.6139} &  \best{0.7129} \\
\hline

\multirow{3}{*}{Shuttle}
 & F1   & 0.7775 & \textsc{oom} & 0.0000 & 0.0000 & \textsc{oot} & \textsc{oot} & \second{0.8359} & 0.7209 & \best{0.8898} \\
 & EDR  & 0.6931 & -- & -0.4932 & -0.3315 & -- & -- & \best{0.8142} & 0.6559 &  \second{0.7652} \\
 & RDRR & 0.7032 & -- & -0.6587 & -0.5114 & -- & -- & \second{0.8462} & 0.7621 & \best{0.9708} \\
\hline

\multirow{3}{*}{Tax\_200k}
 & F1   & 0.3276 & \textsc{oom} & 0.0000 & \textsc{oot} & \textsc{oot} & \textsc{oot} & 0.5227 & \second{0.5450} &   \best{0.6332} \\
 & EDR  & 0.1814 & -- & 0.0000 & -- & -- & -- & 0.4432 & \second{0.4573} & \best{0.6372} \\
 & RDRR & 0.2321 & -- & -0.4498 & -- & -- & -- & 0.5486 & \second{0.5686} &  \best{0.7567} \\
\hline

\cellcolor{gray!15}\multirow{3}{*}{\shortstack[c]{\textbf{Average}\\\textbf{Improvement}}}
& F1   & 0.2866 & 0.5728 & 0.7095 & 0.5732 & 0.6030 & 0.6035 & 0.1397 & 0.2109 & -- \\
\cellcolor{gray!15}\multirow{-3}{*}{}
& EDR  & 0.3776 & 1.2446 & 0.8880 & 0.6970 & 1.0003 & 0.9843 & 0.1654 & 0.3819 & --   \\
\cellcolor{gray!15}\multirow{-3}{*}{\shortstack[c]{\textbf{Average}\\\textbf{Improvement}}}
& RDRR & 0.4213 & 1.6935 & 1.1345 & 0.7836 & 1.1148 & 1.0582 & 0.1847 & 0.4450 & --   \\
\hline

\end{tabular}
\end{adjustbox}
\end{table*}

\subsection{Effectiveness evaluation}

\noindent\textbf{Exp-1: F1-score, EDR, and RDRR.}
We first evaluate the effectiveness of \Frameworkname{}++ against 8 representative baselines on 7 benchmark datasets.
Table~\ref{tab:f1score} reports the F1-score, EDR, and RDRR of all methods across the benchmark datasets. 
Some non-generative methods fail to finish on large datasets within the time or memory budget. 
These cases are marked as OOT (out of time) or OOM (out of memory).

Overall, \Frameworkname{}++ achieves the best F1-score on all 7 datasets.
Compared with the dataset-wise strongest baseline, \Frameworkname{}++ improves the average F1-score from 0.6882 to 0.8130, yielding a relative gain of 18.1\%.
This demonstrates that the proposed instructor--corrector design, EM-style procedure, and confidence-based weighting jointly improve repair accuracy.
Among the non-generative methods, Baran is generally the strongest baseline.
However, its performance varies substantially across datasets, especially when the candidate generation is insufficient.
Among the generative methods, Jellyfish and GIDCL typically outperform most non-generative baselines, demonstrating the benefits of LLM-based repair.
Even against these strong generative competitors, \Frameworkname{}++ remains consistently superior, achieving an average F1-score of 0.8130. This indicates that directly relying on large generative models is less effective than using an LLM as an instructor and an SLM as a corrector for repair.

The same trend is reflected for EDR and RDRR. 
Compared with the dataset-wise strongest baselines, \Frameworkname{}++ improves the average EDR from 0.6121 to 0.7702 and the average RDRR from 0.6661 to 0.8201.
\Frameworkname{}++ obtains the highest EDR on 6 out of the 7 datasets and the highest RDRR on all datasets. 
This suggests that \Frameworkname{}++ not only exactly corrects more erroneous cells, but also moves incorrect values closer to the ground truth.



\begin{figure}[t]
\centering
\includegraphics[width=\linewidth]{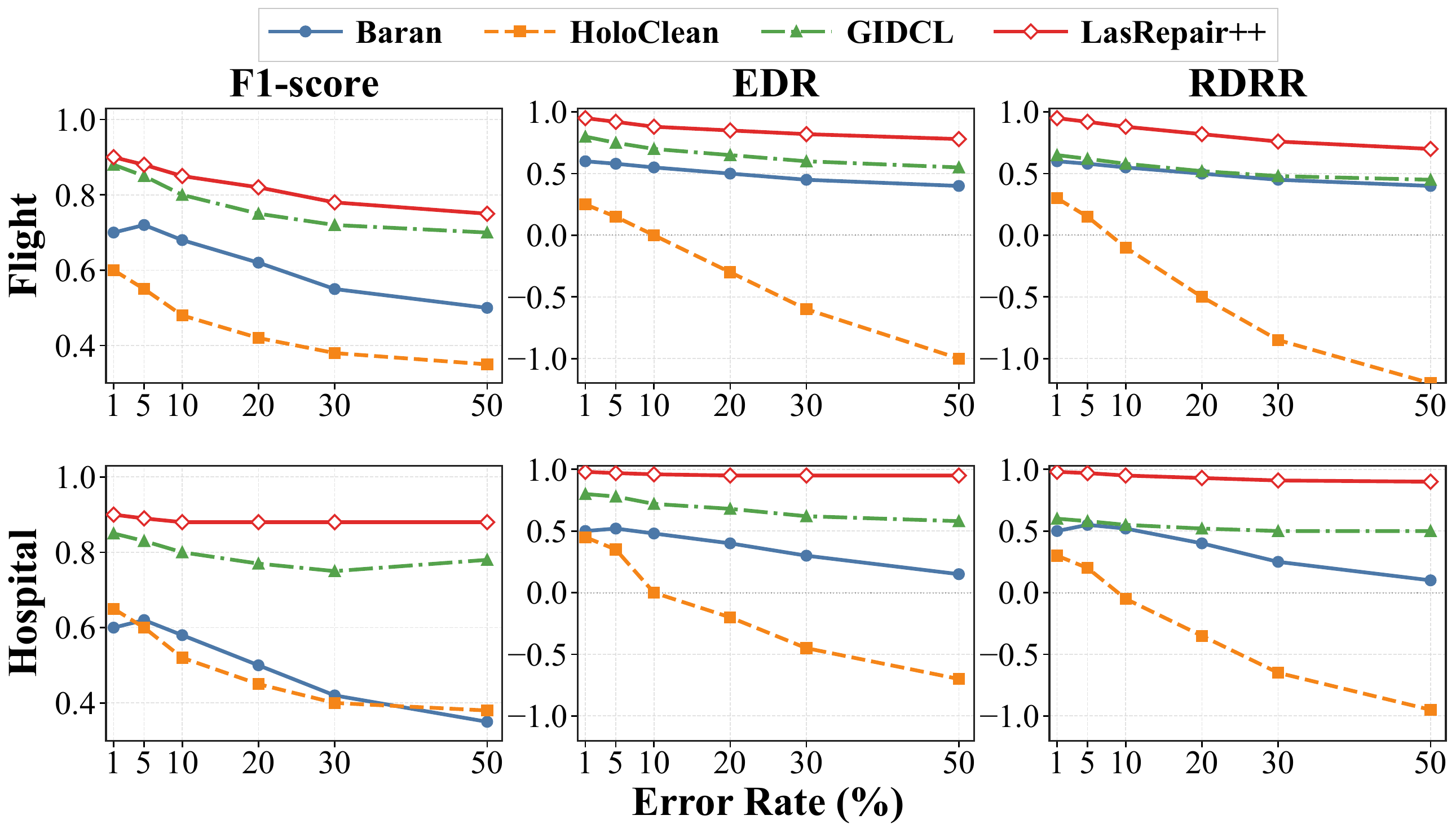}
\caption{Repair performance under varying cell-level error rates on Flight (a--c) and Hospital (d--f), evaluated by F1-score, EDR, and RDRR.}
\label{fig:err_rate}
\end{figure}

\noindent\textbf{Exp-2: Robustness under varying error rates.}
We next evaluate whether \Frameworkname{}++ remains effective when the input dataset becomes increasingly contaminated.
We select Flight and Hospital for this experiment. These two datasets cover all error types considered in our evaluation and are moderately sized, enabling controlled analysis across different error rates.
Specifically, we vary the injected error rate from 1\% to 50\%, and compare \Frameworkname{}++ with representative baselines from both the non-generative and generative categories. 

Figure~\ref{fig:err_rate} reports F1-score, EDR, and RDRR under different error rates.
As the error rate increases, the performance of all methods degrades. 
By comparison, \Frameworkname{}++ degrades more slowly and consistently achieves the best performance across all three metrics on both datasets. 
At the highest error rate of 50\%, \Frameworkname{}++ still outperforms the strongest baseline, GIDCL, by 7.22\% and 19.53\% in F1-score on the Flight and Hospital datasets, respectively.
These results indicate that \Frameworkname{}++ is more robust to severe corruption and remains effective even when the input dataset is highly contaminated.

\begin{figure}[t]
  \centering
  \includegraphics[width=\linewidth, trim={0 5mm 0 0},clip]{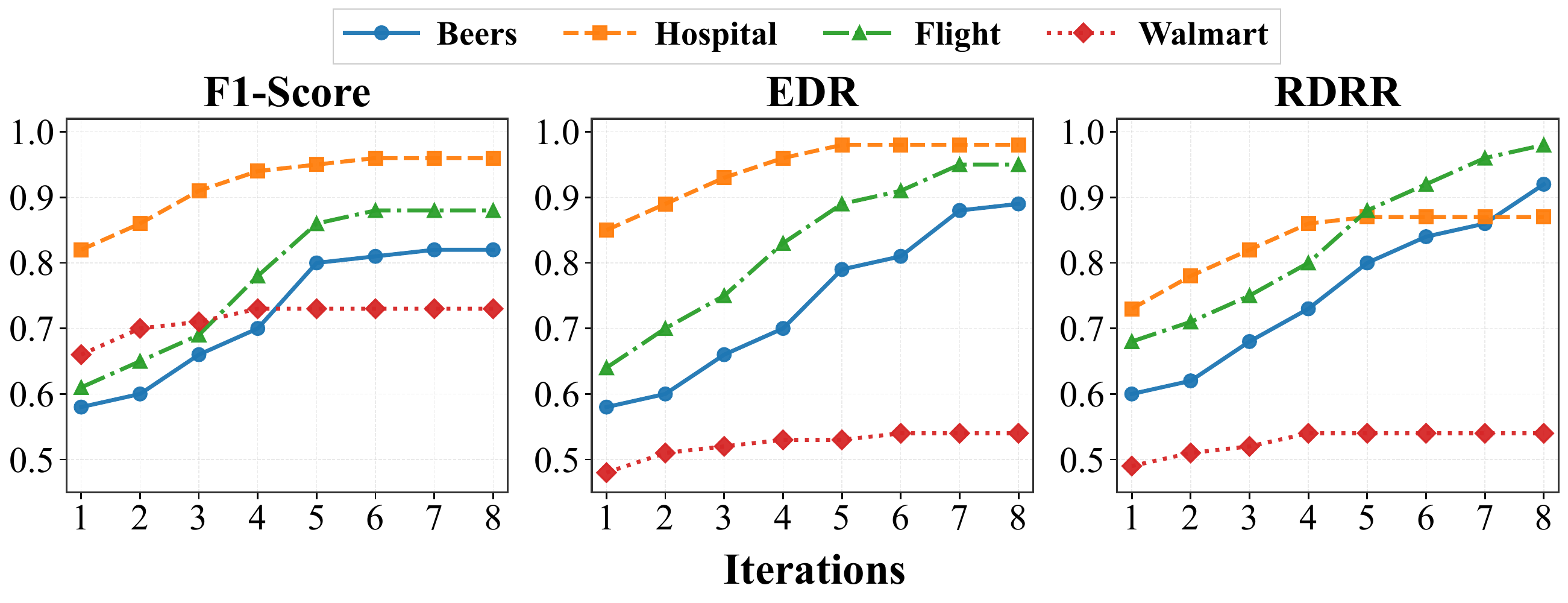}
  \caption{Performance under different numbers of iterations on datasets Beers, Hospital, Flight, and Walmart. We report F1-score, EDR, and RDRR in (a)–(c), respectively.}
  \label{fig:exp3}
\end{figure}

\noindent\textbf{Exp-3: Performance under different numbers of iterations.}
We further study the effect of the EM-style procedure. 
We run \Frameworkname{}++ with different numbers of iterations on Beers, Hospital, Flight, and Walmart, and report F1-score, EDR, and RDRR in Figure~\ref{fig:exp3}.
These datasets exhibit diverse repair difficulties, providing a representative view of how repair quality evolves across iterations.
The results show that repair performance generally improves as the number of iterations increases.
The average F1-score increases from 0.6689 at the first iteration to 0.8662 at the final iteration, yielding a relative improvement of 29.50\%.
Similar trends are observed for EDR and RDRR, which improve by approximately 30.6\% and 31.7\% on average across the 4 datasets.

These results confirm that updating the repaired dataset and fine-tuning the corrector with generated repairs can progressively improve the quality of repairs.
The improvement is sometimes stepwise rather than smooth, because the corrector may learn a recurring repair pattern after several iterations and then correct many similar cells. 
The marginal gain decreases in later iterations, and most improvements are achieved within the first 6--7 rounds.
We therefore set the maximum number of iterations to $T=10$ in the remaining experiments to balance between effectiveness and computational cost.

\subsection{Parameter analysis}
\noindent\textbf{Exp-4: Varying the number of epochs.}
\begin{figure}[t]
  \centering
  \includegraphics[width=\linewidth, trim=0 20 0 0]{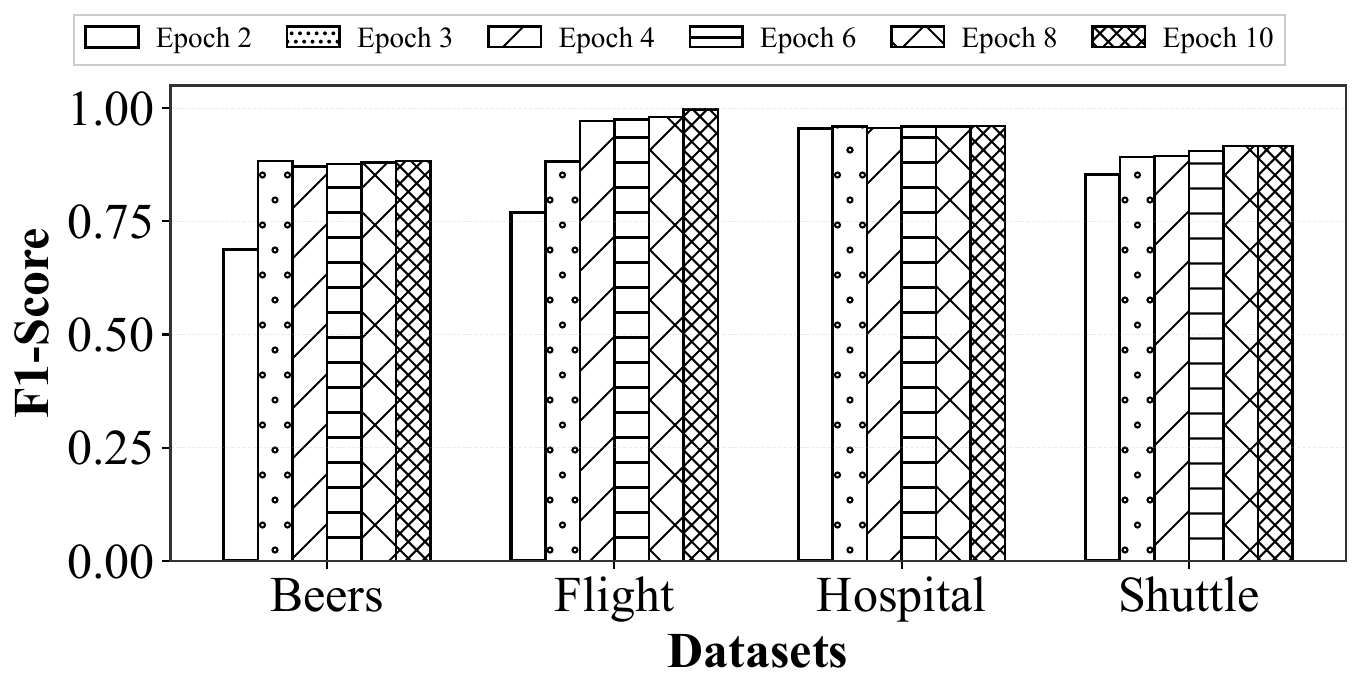}
  \caption{Impact of fine-tuning epochs on repair performance across datasets. F1-score improves noticeably from 2 to 3 epochs, while gains become marginal after about 6 epochs.}
  \label{fig:exp5}
\end{figure}
We study the sensitivity of \Frameworkname{}++ to the fine-tuning budget of the corrector.
We select 4 datasets spanning different data scales, allowing us to examine whether the effect of the training budget remains consistent.
Specifically, we vary the number of fine-tuning epochs over ${2, 3, 4, 6, 8, 10}$ and report the resulting F1-scores in Figure~\ref{fig:exp5}.

Overall, increasing the epoch number from 2 to 3 yields clear improvements across datasets.
This indicates that a very small training budget is insufficient for the corrector to adapt to dataset-specific repair patterns.
Increasing the epoch number from 2 to 3 improves the average F1-score from 0.8169 to 0.9043, while further increasing it to 10 epochs only raises the average F1-score to 0.9397.
This suggests diminishing returns from longer training.
We therefore use 3 epochs as the default setting in the remaining experiments, which provides a good balance between accuracy and computational cost.

\begin{table}[t]
    \centering
    \caption{Effect of the softmax temperature ($\tau$) on repair performance.}
    \begin{adjustbox}{max width=\columnwidth}
    \begin{tabular}{l|ccc|ccc|ccc}
        \toprule
        \multirow{2}{*}{Dataset}
            & \multicolumn{3}{c}{0.5}
            & \multicolumn{3}{c}{1.0}
            & \multicolumn{3}{c}{2.0} \\
        \cmidrule(lr){2-4}\cmidrule(lr){5-7}\cmidrule(lr){8-10}
            & F1 & EDR & RDRR
            & F1 & EDR & RDRR
            & F1 & EDR & RDRR \\
        \midrule
        Beers
            & 0.8721 & 0.8697 & 0.9147
            & 0.8831 & 0.8807 & 0.9262
            & 0.8554 & 0.8531 & 0.8971 \\
        Shuttle
            & 0.8591 & 0.8507 & 0.8913
            & 0.8920 & 0.8833 & 0.9254
            & 0.8701 & 0.8616 & 0.9027 \\
        \bottomrule
    \end{tabular}
    \end{adjustbox}
    \label{tab:exp6}
\end{table}

\noindent\textbf{Exp-5: Varying temperature in confidence-based weighting.} 
We next evaluate the effect of the softmax temperature $\tau$ in the confidence-based weighting.
We vary $\tau$ in $\{0.5, 1.0, 2.0\}$ and report F1-score, EDR, and RDRR on 2 representative datasets with substantially different scales, as summarized in Table~\ref{tab:exp6}. 
Overall, $\tau=1.0$ achieves the best overall performance on both datasets across all 3 metrics, indicating that a moderate temperature yields the most effective confidence-based weighting. 
Compared with $\tau=0.5$ and $\tau=2.0$, $\tau=1.0$ improves the average F1-score by 2.54\% and 2.87\%, respectively.
When $\tau$ is too small, the confidence distribution becomes overly sharp, causing the model to place excessive weight on a small number of candidates, which may amplify overconfident but incorrect repairs. 
When $\tau$ is too large, the distribution becomes overly smooth, weakening the distinction between reliable and unreliable repairs.
We therefore use $\tau=1.0$ as the default setting.

\begin{figure}[t]
  \centering
  \includegraphics[width=\linewidth, trim=0 5 0 5, clip]{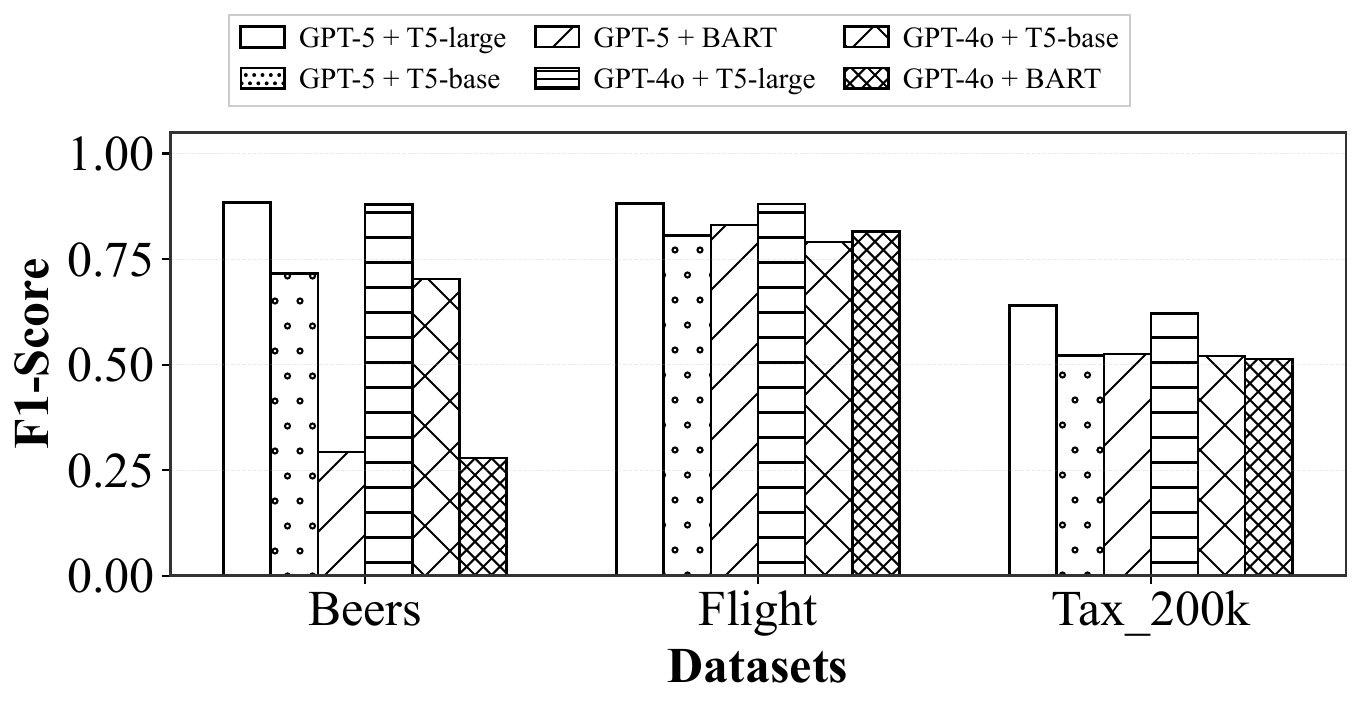}
  \caption{F1-score under different LLM--SLM combinations on Beers, Flight, and Tax\_200k. Larger correctors consistently yield better repair accuracy, while changing the instructor has a smaller effect.}
  \label{fig:exp4}
\end{figure}

\noindent\textbf{Exp-6: Effect of LLM--SLM combinations.}
We investigate how different instructor and corrector choices affect repair performance.
For the LLM instructor, we compare GPT-5 and GPT-4o~\cite{openai2024gpt4o}. 
For the SLM corrector, we vary model capacity by choosing from T5-large (0.77B parameters), T5-base (0.22B), and BART-base (0.1B)~\cite{lewis2019bart}.
We select the Beers, Flight, and Tax\_200k datasets because they vary in scale and include all error types.
Figure~\ref{fig:exp4} reports the resulting F1-scores.

The results show that the capacity of the corrector has a much stronger impact than the choice of the instructor.
With GPT-5 as the instructor, replacing T5-large with T5-base and BART-base reduces the average F1-score by 15.04\% and 31.45\%.
In contrast, replacing GPT-5 with GPT-4o while retaining T5-large reduces the average F1-score by 1.01\%.
These results indicate that the corrector capability has a substantially greater impact on repair performance than the instructor.
This trend is especially evident on more challenging datasets, where effective repair requires the corrector to learn complex value patterns and dependencies. 
In contrast, when the corrector is fixed, changing the instructor from GPT-5 to GPT-4o leads to minor differences. 
This suggests that once the instructor provides a reasonable column-level context structure, the dominant factor becomes the repair capacity of the corrector.
We therefore use GPT-5 as the instructor and T5-large as the corrector.
\begin{table*}[t]
    \centering
    \caption{Effect of training size on repair performance, evaluated by F1-score, EDR, and RDRR.}
    \label{tab:trainingsize}
    \begin{adjustbox}{max width=\textwidth}
    \begin{tabular}{lccc ccc ccc ccc ccc}
        \toprule
        \multirow{2}{*}{Dataset\textbackslash Training size}
            & \multicolumn{3}{c}{200}
            & \multicolumn{3}{c}{500}
            & \multicolumn{3}{c}{1000}
            & \multicolumn{3}{c}{2000}
            & \multicolumn{3}{c}{3000} \\
        \cmidrule(lr){2-4}\cmidrule(lr){5-7}\cmidrule(lr){8-10}\cmidrule(lr){11-13}\cmidrule(lr){14-16}
            & F1 & EDR & RDRR
            & F1 & EDR & RDRR
            & F1 & EDR & RDRR
            & F1 & EDR & RDRR
            & F1 & EDR & RDRR \\
        \midrule
        Tax\_200k
            & 0.3458 & 0.3474 & 0.4052
            & 0.5174 & 0.5198 & 0.6063
            & 0.6172 & 0.6201 & 0.7232
            & 0.6398 & 0.6428 & 0.7497
            & 0.6284 & 0.6313 & 0.7363 \\
        Shuttle
            & 0.5580 & 0.5525 & 0.5789
            & 0.5775 & 0.5718 & 0.5991
            & 0.7947 & 0.7869 & 0.8245
            & 0.8745 & 0.8659 & 0.9073
            & 0.8869 & 0.8782 & 0.9201 \\
        \bottomrule
    \end{tabular}
    \end{adjustbox}
\end{table*}

\begin{figure}[t]
  \centering
  \includegraphics[width=\linewidth, trim= 0 0 0 0]{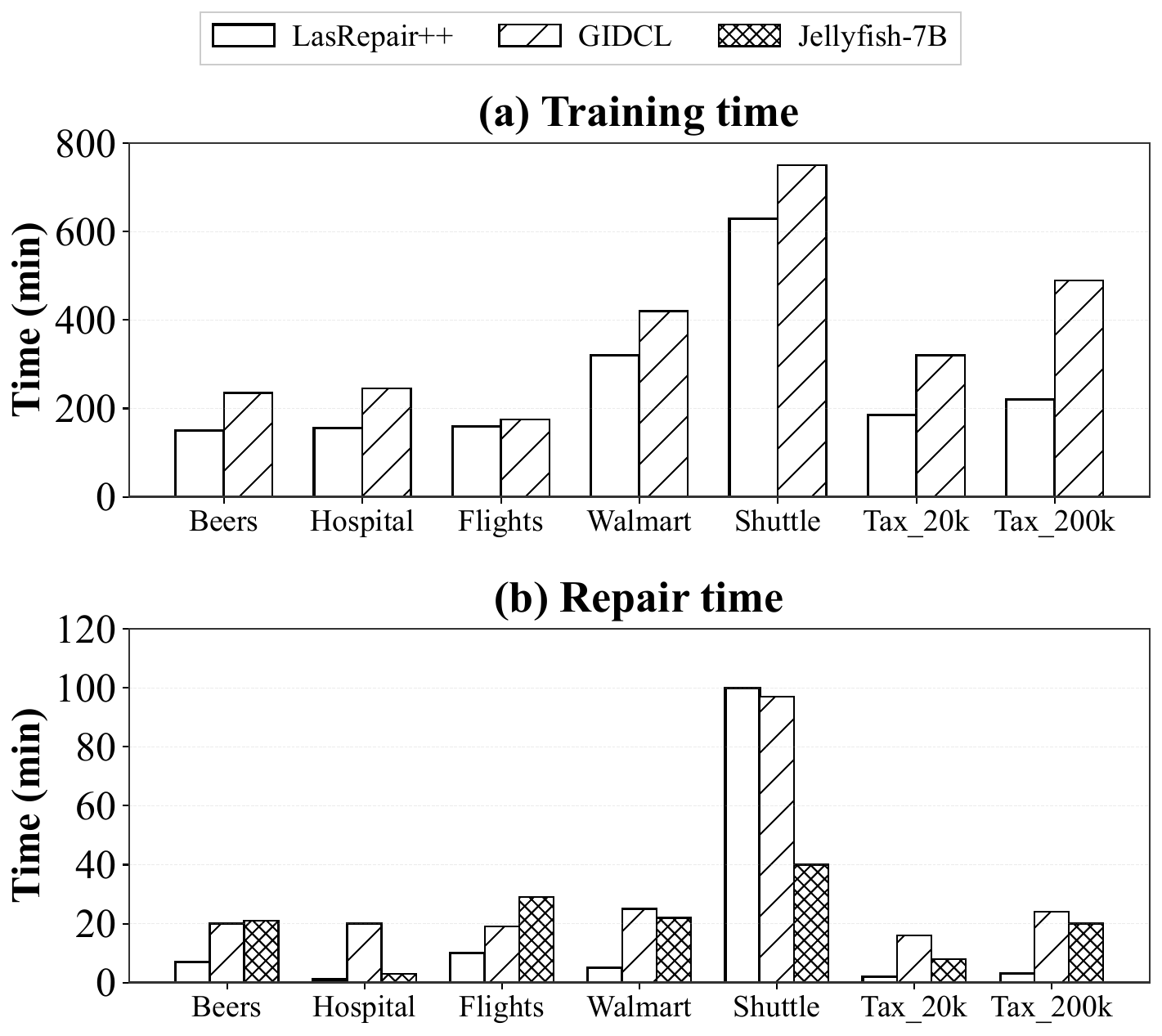}
  \caption{Running time of different repair methods on seven datasets. Panels (a) and (b) report training time and repair time, respectively, both in minutes.}
  \label{fig:runtime}
\end{figure}


\noindent\textbf{Exp-7: Effect of training set size.}
We evaluate the sample efficiency of \Frameworkname{}++ on large-scale datasets.
We vary the number of sampled training tuples on two large-scale datasets, Shuttle and Tax\_200k. 
Specifically, we train with $\{200, 500, 1{,}000, 2{,}000, 3{,}000\}$ tuples and report F1-score, EDR, and RDRR in Table~\ref{tab:trainingsize}.

Overall, performance improves as the training set size increases. 
With 200 tuples, the corrector is exposed to limited dataset-specific training data and therefore achieves relatively low accuracy.
Increasing the training set size to 1{,}000 tuples brings substantial gains, showing that \Frameworkname{}++ can learn useful repair behavior from a moderate number of examples.
Further increasing the training set size to 2{,}000--3{,}000 tuples provides additional but smaller improvements, suggesting diminishing returns.
This result supports our default choice of using 2{,}000 sampled tuples for large-scale datasets, which balances repair performance and cost.

\subsection{Efficiency evaluation}
\noindent\textbf{Exp-8: Runtime.}
We compare the runtime of \Frameworkname{}++ against the generative repair baselines, reporting both training time and repair time on 7 datasets in Figure~\ref{fig:runtime}. 

In the training stage, \Frameworkname{}++ fine-tunes a lightweight corrector rather than an LLM.
Its training time is lower than or comparable to the generative baselines on most datasets.
Compared with GIDCL, \Frameworkname{}++ reduces training time by approximately 40\% on average across the evaluated datasets.

In the repair stage, \Frameworkname{}++ is also consistently efficient because the instructor is used only once to select column-level context, while the corrector performs cell-level repair.
Therefore, the repair cost scales mainly with the number of detected erroneous cells and the corrector inference cost.
Across the seven datasets, \Frameworkname{}++ reduces repair time by 74\% on average compared with GIDCL and by about 69\% compared with Jellyfish.
The exception is Shuttle, where the high error rate results in a large number of detected cells and thus incurs repeated repair operations.
Overall, \Frameworkname{}++ achieves an efficiency--accuracy trade-off.

\begin{figure}[t]
  \centering
  \includegraphics[width=\linewidth]{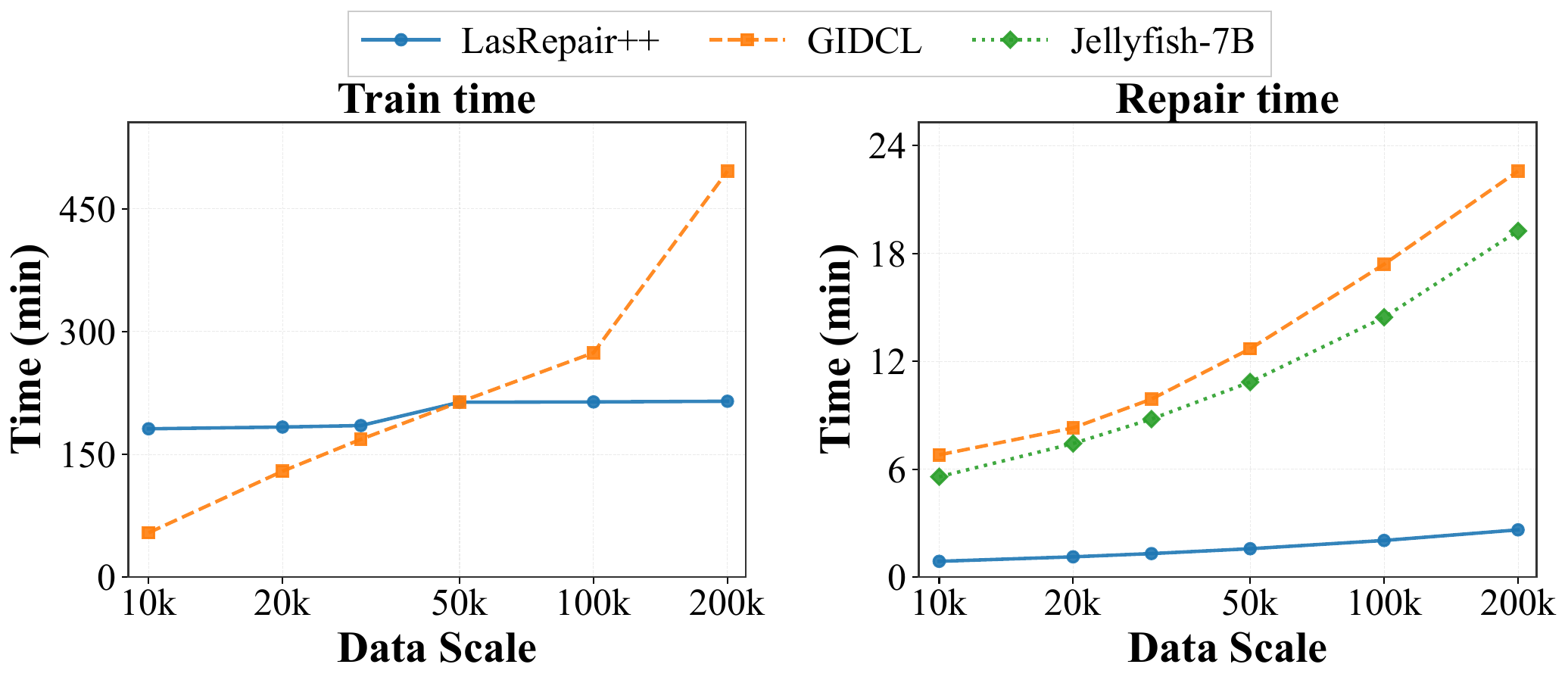}
  \caption{Scalability of different repair methods under increasing data scales on the Tax dataset.}
  \label{fig:scalability}
\end{figure}



\noindent\textbf{Exp-9: Scalability.}
Figure~\ref{fig:scalability} reports the training and repair time of \Frameworkname{}++, Jellyfish-7B, and GIDCL as the Tax dataset scales from 10k to 200k tuples.
Jellyfish-7B is omitted from the training comparison because it does not require dataset-specific fine-tuning.

For training, \Frameworkname{}++ exhibits a relatively flat growth trend because it fine-tunes the corrector on a bounded sample of tuples for large-scale datasets.
When the dataset size increases by $20\times$, its training time increases only from 181.26 to 214.91 minutes, corresponding to $1.19\times$ the original time.
In contrast, the training time of GIDCL grows by 9.15$\times$, from 54.19 to 496.13 minutes.

For repair, GIDCL and Jellyfish-7B exhibit similar trends because they use the same underlying LLM.
As the dataset size increases from 10k to 200k, the repair time of \Frameworkname{}++ grows by approximately $2.9\times$, whereas GIDCL and Jellyfish-7B grow by about $3.3\times$ and $3.4\times$, respectively. These results demonstrate the superior scalability of \Frameworkname{}++.

\subsection{Ablation study}
\begin{figure}[t]
  \centering
  \includegraphics[width=\linewidth]{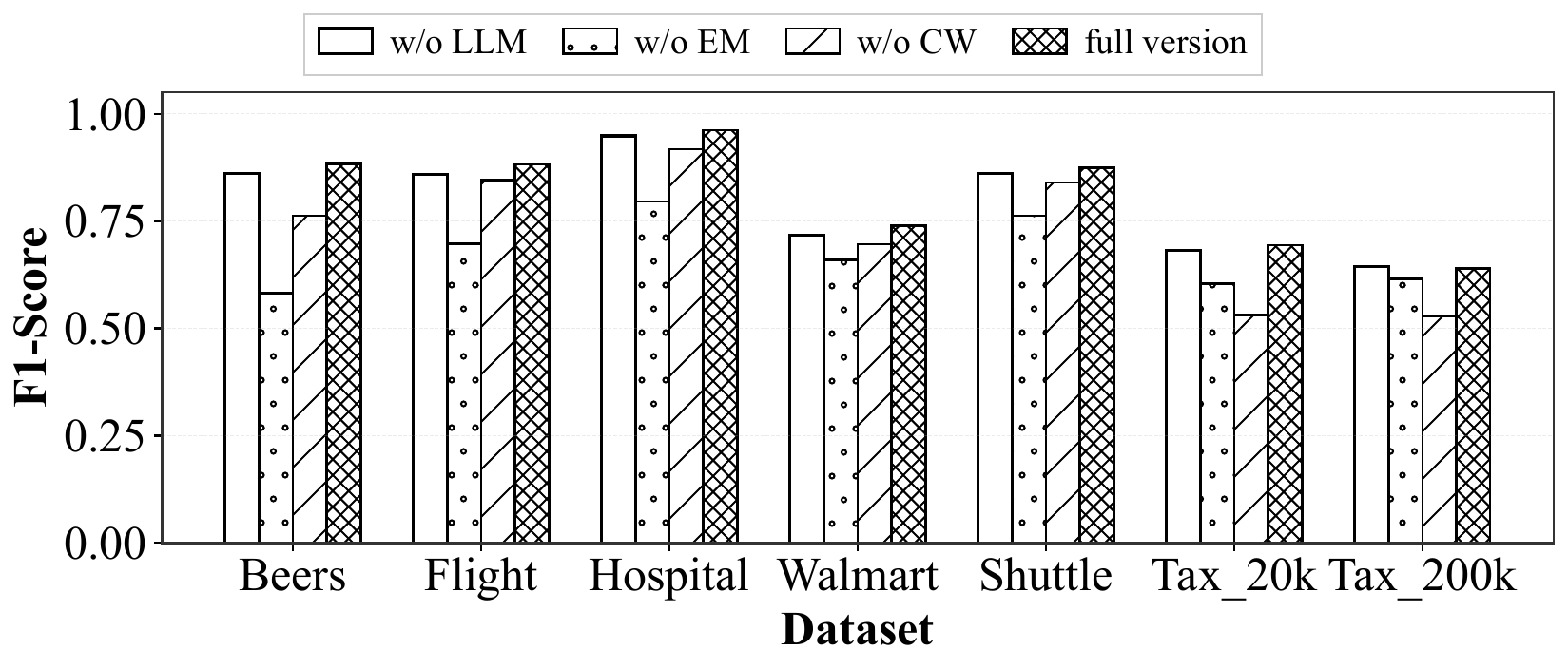}
  \caption{Ablation study of \Frameworkname{}++ on seven benchmark datasets. We individually remove each component, including LLM-based context selection (LLM), the EM-style procedure (EM), and confidence-based weighting (CW), and report the resulting F1-scores.}
  \label{fig:ablation}
\end{figure}
\noindent\textbf{Exp-10: Ablation study.}
We conduct an ablation study to quantify the contribution of three core components in \Frameworkname{}++:
(i) the LLM as an instructor for context selection, 
(ii) the EM-style procedure, and 
(iii) the confidence-based weighting. 
For each ablated variant, we remove one component while keeping the remaining settings unchanged.
Figure~\ref{fig:ablation} reports the F1-scores on 7 benchmark datasets. 

Removing the EM-style procedure causes the largest degradation: the average F1-score drops from 0.8104 to 0.6736, with especially large declines on Beers and Flight.
Removing confidence-based weighting also reduces the average F1-score to 0.7314, with the largest drops on Tax\_20k and Tax\_200k.
Removing the instructor results in an average F1-score decrease of 1.81\%, consistent with its primary role in filtering irrelevant attributes to improve inference efficiency.
Overall, the EM-style procedure makes the largest contribution, followed by confidence-based weighting and the instructor.
Taken together, the ablation results show that the three components of \Frameworkname{}++ provide complementary benefits.

\subsection{Case study}
\begin{figure}[t]
  \centering
  \includegraphics[width=\linewidth, trim=0 0 0 3, clip]{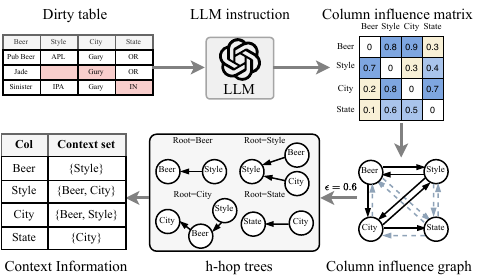}
  \caption{Case study demonstrating how the column-specific context sets are extracted from a dirty dataset.}
  \label{fig:case}
\end{figure}
We present a case study to illustrate the column-specific context selection of the LLM instructor. 
As shown in Figure~\ref{fig:case}, the instructor estimates inter-column influence from a simplified Beers dataset and extracts a context set for each target column from the resulting graph.
For example, although \textsf{Style} has no direct influence on \textsf{City} under the threshold $\epsilon=0.6$, its influence propagates through \textsf{Beer}. Therefore, both columns are selected as contexts for repairing \textsf{City}.
This example demonstrates that the instructor can extract compact and relevant contexts to support repair.


\section{Related Work}
\label{sec:related}
Existing methods can be broadly divided into \emph{conventional}, \emph{discriminative}, and \emph{generative} methods.

Conventional methods rely on integrity constraints or probabilistic inference. 
Constraint-oriented methods such as Holistic~\cite{xu2013holistic} and BigDansing~\cite{zuhair2015bigdansing} repair violations of integrity constraints, with BigDansing emphasizing scalability. 
Unified~\cite{fei2011unified} formulates repair under functional dependencies using a minimum-description-length principle.
HoloClean~\cite{theo2017holoclean} combines constraints, external knowledge, and statistical dependencies through probabilistic inference.

Discriminative methods learn predictive models to select appropriate repairs.
Baran~\cite{mohammad2020baran} generates candidates from contextual representations and uses column-specific classifiers for selection.
BoostClean~\cite{sanjay2017boostclean} learns combinations of detection and repair operations to improve downstream prediction.

Generative methods leverage pretrained language models to produce repairs from serialized table contexts.
Jellyfish~\cite{zhang2024jellyfish} adapts local LLMs for general-purpose data preprocessing tasks, including repair.
GIDCL~\cite{yan2024gidcl} combines graph-based table modeling with LLM-based repair.
Although these methods improve repair capability, they still depend on context quality and suffer from uncertain model outputs.
This limitation motivates our focus on instructor-guided context selection, EM-style procedure, and confidence-based weighting.

\section{Conclusion}
In this paper, we present the \Frameworkname{} family, which combines the semantic reasoning capability of LLMs with the efficiency of SLMs for scalable data repair. 
The base framework, \Frameworkname{}, uses an LLM as a table-level structural instructor to select compact and relevant repair contexts, and fine-tunes an SLM as a scalable cell-level corrector.
Building on this, \Frameworkname{}+ adopts an EM-style procedure to jointly refine the repaired table and the corrector over successive iterations.
\Frameworkname{}++ further applies confidence-based weighting to reduce the influence of unreliable generated repairs.
Experiments on 7 benchmark datasets highlight the effectiveness and efficiency of our method.

\clearpage
\balance
\bibliographystyle{IEEEtran}
\bibliography{sample}


\end{document}